\newtheorem{lemma}{Lemma}[section]
\newtheorem{corollary}[lemma]{Corollary}
\newtheorem{theorem}[lemma]{Theorem}
\newtheorem{example}[lemma]{Example}
\theoremstyle{definition} 
\newtheorem{definition}[lemma]{Definition}
\newtheorem{remark}[lemma]{Remark}
\newtheorem{remarks}[lemma]{Remarks}
\newcommand{\Nat}{{\mathbb N}}
\newcommand\reals{{\mathbb R}}
\newcommand{\dg}{\sp{\text{\rm o}}}
\newcommand{\spec}{\operatorname{spec}}
\newcommand{\Ext}{\operatorname{Ext}}
\begin{document}

\title{States and synaptic algebras}

\author{David J. Foulis{\footnote{Emeritus Professor, Department of
Mathematics and Statistics, University of Massachusetts, Amherst,
MA; Postal Address: 1 Sutton Court, Amherst, MA 01002, USA;
foulis@math.umass.edu.}}\hspace{.05 in} Anna Jen\v cov\'a and Sylvia
Pulmannov\'{a}{\footnote{ Mathematical Institute, Slovak Academy of
Sciences, \v Stef\'anikova 49, SK-814 73 Bratislava, Slovakia;
pulmann@mat.savba.sk. The second and third authors were supported by grant
VEGA No.2/0069/16.}}}

\date{}

\maketitle

\begin{abstract}
\noindent Different versions of the notion of a state have been
formulated for various so-called quantum structures. In this
paper, we investigate the interplay among states on synaptic
algebras and on its sub-structures. A synaptic algebra is a
generalization of the partially ordered Jordan algebra of all
bounded self-adjoint operators on a Hilbert space. The paper
culminates with a characterization of extremal states on a
commutative generalized Hermitian algebra, a special kind of
synaptic algebra.
\end{abstract}

\noindent{\bf Key Words:} synaptic algebra, GH-algebra, Jordan algebra, convex  effect
algebra, MV-algebra, $\ell$-group, order unit normed space, state, extremal state.

\medskip

\noindent{\bf AMS Classification} 81P10, 81Q10 (46B40)

\section{Introduction}

Synaptic algebras, featured in this paper, incorporate several
so-called ``quantum structures." Quantum structures were originally
understood to be mathematical systems that permit a perspicuous
representation for at least one of the key ingredients of
quantum-mechanical theory, e.g., states, observables, symmetries,
properties, and experimentally testable propositions \cite{DvPu, FEOSS,
Var}. In spite of the adjective `quantum,' a variety of mathematical
structures arising in classical physics, computer science, psychology,
neuroscience, fuzzy logic, fuzzy set theory, and automata theory are
now regarded as quantum structures.

As per the title of this paper, which is intended to complement the
authoritative articles \cite{DvurStates1, DvurStates2} by A.
Dvure\v{c}enskij, we shall study the interplay among states on
a synaptic algebra and on some of its sub-structures that qualify as
quantum structures. A state assigns to a real observable (a physical
quantity) the expected value of the observable when measured in that
state. Also, a state assigns to a testable $2$-valued proposition the
probability that the proposition will test `true' in that state.

The adjective `synaptic' is derived from the Greek word `sunaptein,' meaning
to join together; indeed synaptic algebras unite the notions of an order-unit
normed space \cite[p. 69]{Alf}, a special Jordan algebra \cite{McC}, a convex
effect algebra \cite{BBGP, BGPconvex}, and an orthomodular lattice \cite
{Beran, Kalm}.

Since virtually every quantum structure, including synaptic algebras,
are partially ordered sets (posets for short), we review some of the
basic definitions and facts concerning posets in Section \ref{sc:Posets}.
Also since every synaptic algebra $A$ is an extension of a so-called convex
effect algebra $E$, and the states on $A$ are in affine bijective
correspondence with the states on $E$, we offer a brief review of effect
algebras and states thereon in Section \ref{sc:Effect}. Moreover, every
synaptic algebra is an order-unit normed linear space, a structure
that we review in Section \ref{sc:OUNS}, where states on an order-unit
normed space are defined and some of their properties are recalled.

Section \ref{sc:SA} is devoted to a brief account of some of the basic
properties of synaptic algebras (SAs for short) and to a special case
thereof called generalized Hermitian (GH-) algebras. States on an SA are
defined just as they are for any order-unit normed space. In Section \ref
{sc:ComSA}, we consider commutative SAs and their functional representations.
In Section \ref{sc:statesCGH} we characterize extremal states on commutative
GH-algebras.

In what follows, the notation $:=$ means `equals by definition,'
the phrase `if and only if' is abbreviated as `iff,' $\Nat:=\{1,2,3,...\}$
is the system of natural numbers, the ordered field of real numbers is
denoted by $\reals$, and $\reals\sp{+}:=\{\alpha\in\reals:0\leq\alpha\}$.

\section{Partially ordered sets} \label{sc:Posets}

A binary relation $\leq$ defined on a nonempty set ${\mathcal P}$ is
a \emph{partial order relation} iff for all $a,b,c\in{\mathcal P}$,
(1) $a\leq a$, (2) $a\leq b$ and $b\leq a\Rightarrow a=b$, and (3)
$a\leq b$ and $b\leq c\Rightarrow a\leq c$. A \emph{partially ordered
set} (\emph{poset} for short) is a nonempty set ${\mathcal P}$
equipped with a distinguished partial order relation $\leq$.

Suppose that ${\mathcal P}$ is a poset with partial order relation
$\leq$ and let $a,b\in{\mathcal P}$. We write $b\geq a$ iff $a\leq b$,
and the notation $a<b$ (or $b>a$) means that $a\leq b$ but $a\not=b$.
Let ${\mathcal Q}\subseteq{\mathcal P}$. We say that $a$ is a
\emph{lower bound}, ($b$ is an \emph{upper bound}), for ${\mathcal Q}$
iff $a\leq q$, ($q\leq b$), for all $q\in{\mathcal Q}$. Also, $a$ is
the \emph{least} or the \emph{minimum}, ($b$ is the \emph{greatest} or
the \emph{maximum}) element of ${\mathcal Q}$ iff $a$ is a  lower bound
for ${\mathcal Q}$ and $a\in{\mathcal Q}$, ($b$ is an upper bound for
${\mathcal Q}$ and $b\in{\mathcal Q}$). The notation $a=\bigwedge
{\mathcal Q}$, ($b=\bigvee{\mathcal Q}$), means that $a$ is the greatest
lower bound, ($b$ is the least upper bound), of ${\mathcal Q}$. The
greatest lower bound $a=\bigwedge{\mathcal Q}$, (the least upper bound
$b=\bigvee{\mathcal Q}$), if it exists, is also called the \emph{infimum},
(the \emph{supremum}), of ${\mathcal Q}$ in ${\mathcal P}$.

If it exists, the greatest lower bound, (least upper bound), of the set
$\{a,b\}\subseteq{\mathcal P}$ is written as $a\wedge b$, (as $a\vee b$)
and is often referred to as the \emph{meet} (the \emph{join}) of $a$
and $b$. If it is necessary to make clear that an existing meet $a\wedge b$
or join $a\vee b$ is calculated in ${\mathcal P}$, it may be written as
$a\wedge\sb{\mathcal P}b$ or $a\vee\sb{\mathcal P}b$. The poset
${\mathcal P}$ is said to be \emph{lattice ordered}, or simply a \emph
{lattice} iff every pair of elements in ${\mathcal P}$ has a meet and a
join in ${\mathcal P}$. By definition, a lattice ${\mathcal P}$ is \emph
{distributive} iff, for all $a,b,c\in{\mathcal P}$, $a\wedge(b\vee c)
=(a\wedge b)\vee(a\wedge c)$, or equivalently $a\vee(b\wedge c)=
(a\vee b)\wedge(a\vee c)$. A mapping from one lattice to another is a
\emph{lattice homomorphism} iff it preserves both meets and joins.

A least (a greatest) element of the poset ${\mathcal P}$ is often written
as $0$ (as $1$). A poset with both a least and a greatest element is called
a \emph{bounded poset}. If $0$ is the least element of ${\mathcal P}$, then
the elements $a$ and $b$ in ${\mathcal P}$ are said to be \emph{disjoint}
iff, for all $c\in{\mathcal P}$, $c\leq a,b\Rightarrow c=0$, i.e., iff
$a\wedge b=0$.

Suppose that ${\mathcal P}$ is a bounded lattice. Then the elements
$a,b\in{\mathcal P}$ are said to be \emph{complements} iff $a\wedge b
=0$ and $a\vee b=1$, and ${\mathcal P}$ is called \emph{complemented}
iff every element in ${\mathcal P}$ has a complement. A \emph{Boolean
algebra} can be defined as a bounded complemented and distributive
lattice \cite{Sik}. In a Boolean algebra, every element has a unique
complement.

If $X$ is a nonempty set, then a \emph{field} of subsets of $X$ is a set
${\mathcal F}$ of subsets of $X$ such that $X\in{\mathcal F}$ and, for $S,T
\in{\mathcal F}$, $S\cap T, S\cup T, X\setminus S\in{\mathcal F}$. Partially
ordered by set containment $\subseteq$, a field ${\mathcal F}$ of subsets of
$X$ is a Boolean algebra, with $S\wedge T=S\cap T$, $S\vee T=S\cup T$, and
with $X\setminus S$ as the complement of $S$ for all $S,T\in{\mathcal F}$.

An \emph{orthocomplementation} on a bounded lattice ${\mathcal P}$ is
a mapping $a\mapsto a\sp{\perp}$ on ${\mathcal P}$ such that, for all
$a,b\in{\mathcal P}$, (1) $a$ and $a\sp{\perp}$ are complements, (2) $a
=(a\sp{\perp})\sp{\perp}$, and (3) $a\leq b\Rightarrow b\sp{\perp}\leq a
\sp{\perp}$. An \emph{orthomodular lattice} (OML) \cite{Beran, Kalm}
is a bounded lattice ${\mathcal P}$ equipped with an orthocomplementation
$a\mapsto a\sp{\perp}$ such that the \emph{orthomodular identity} $a\leq b
\Rightarrow b=a\vee(b\wedge a\sp{\perp})$ holds for all $a,b\in{\mathcal P}$.
It is well known that a Boolean algebra is the same thing as a distributive
OML.

\begin{definition} \label{df:CompletenessProps}
Let ${\mathcal P}$ be a poset. Then{\rm:}
\begin{enumerate}
\item[(1)] ${\mathcal P}$ is \emph{upward directed} (\emph{downward
 directed}) iff, for all $a,b\in{\mathcal P}$, $\{a,b\}$ has an
 upper bound (a lower bound) in ${\mathcal P}$. If ${\mathcal P}$
 is both upward and downward directed, it is said to be \emph{directed}.
\item[(2)] ${\mathcal P}$ is \emph{$\sigma$-complete} iff every
 sequence in ${\mathcal P}$ has a supremum (an infimum) in
 ${\mathcal P}$.
\item[(3)] ${\mathcal P}$ is \emph{Dedekind $\sigma$-complete} iff
 every sequence in ${\mathcal P}$ that is bounded above (below) has
 a supremum (an infimum) in ${\mathcal P}$.
\item[(4)] If $a\sb{1}\leq a\sb{2}\leq a\sb{3}\leq\cdots$ is an
 ascending sequence in ${\mathcal P}$ with supremum $a=\bigvee\sb{n=1}
 \sp{\infty}a\sb{n}$ in ${\mathcal P}$, we write $a\sb{n}\nearrow a$.
 Similar notation $a\sb{n}\searrow a$ applies to a descending sequence
 $a\sb{1}\geq a\sb{2}\geq a\sb{3}\geq\cdots$ with infimum $a=\bigwedge
 \sb{n=1}\sp{\infty}a\sb{n}$ in ${\mathcal P}$.
\item[(5)] ${\mathcal P}$ is \emph{monotone $\sigma$-complete} iff, for
 every bounded ascending (descending) sequence $(a\sb{n})\sb{n=1}\sp
 {\infty}$ in ${\mathcal P}$, there exists $a\in{\mathcal P}$ with $a
 \sb{n}\nearrow a$ (with $a\sb{n}\searrow a$).
\end{enumerate}
\end{definition}

\noindent Obviously, if ${\mathcal P}$ is $\sigma$-complete, then it is
Dedekind $\sigma$-complete, and if it is Dedekind $\sigma$-complete, then
it is monotone $\sigma$-complete.

\begin{remarks} \label{rm:involution}
Let ${\mathcal P}$ be a poset. A mapping $a\mapsto a\sp{\prime}$ from
${\mathcal P}$ to itself is called an \emph{involution} iff it is
order reversing and of period two, i.e., iff, for all $a,b\in
{\mathcal P}$, $a\leq b\Rightarrow b\sp{\prime}\leq a\sp{\prime}$ and
$(a\sp{\prime})\sp{\prime}=a$. An involution $a\mapsto a\sp{\prime}$
on ${\mathcal P}$ provides a ``duality" between upper and lower
bounds in the sense that $b$ is an upper bound for ${\mathcal Q}
\subseteq{\mathcal P}$ iff $b\sp{\prime}$ is a lower bound for $\{q
\sp{\prime}:q\in{\mathcal Q}\}$. Thus, if ${\mathcal P}$ admits an
involution, then ${\mathcal P}$ is upward directed iff it is downward
directed iff it is directed, and similar remarks apply to conditions
(2)--(5) in Definition \ref{df:CompletenessProps}.
\end{remarks}

\section{Effect algebras} \label{sc:Effect}

In this section we briefly review some definitions and facts in regard
to effect algebras, states on effect algebras, and MV-effect algebras.

\begin{definition}
An \emph{effect algebra} \cite{FoBe} is a system $(E;0,1,\sp{\perp},\oplus)$
consisting of two constants, $0$ and $1$ in $E$, a unary operation $e\mapsto
e\sp{\perp}$ on $E$, and a partially defined binary operation $\oplus$ on $E$,
called the \emph{orthosummation}, satisfying the following conditions for all
$d,e,f\in E${\rm:}
\begin{enumerate}
\item[(1)] (\emph{Associativity}) $d\oplus(e\oplus f)=(d\oplus e)\oplus f$
 in the sense that if either side is defined, then both sides are defined
 and the equality holds.
\item[(2)] (\emph{Commutativity}) $e\oplus f=f\oplus e$ in the same sense
 as in (1).
\item[(3)] (\emph{Orthosupplementation}) For each $e\in E$, there exists a
 unique element $e\sp{\perp}\in E$, called the \emph{orthosupplement of
 $e$}, such that $e\oplus e\sp{\perp}$ is defined and $e\oplus e\sp{\perp}
 =1$.
\item[(4)] (\emph{Zero-One Law}) If $e\oplus 1$ is defined, then $e=0$.
\end{enumerate}
\end{definition}

For short, we often denote an effect algebra $(E;0,1,\sp{\perp},\oplus)$
simply by $E$. Let $E$ be an effect algebra and let $d,e,f\in E$. We say
that $e$ and $f$ are \emph{orthogonal}, in symbols $e\perp f$, iff
$e\oplus f$ is defined. It is not difficult to show that the
\emph{cancelation law} holds, i.e., if $e\perp d$ and $f\perp d$,
then $e\oplus d=f\oplus d\Rightarrow e=f$.

We define the \emph{induced partial order} on $E$ by $e\leq f$ iff there
exists $d$ such that $e\perp d$ and $e\oplus d=f$. It is straightforward
to show that $E$ is a bounded poset under $\leq$. Also $e\leq f
\Leftrightarrow f\sp{\perp}\leq e\sp{\perp}$, $(e\sp{\perp})\sp{\perp}=e$,
$e\perp f\Leftrightarrow e\leq f\sp{\perp}$, and $e\leq f\Rightarrow f=
e\oplus(e\oplus f\sp{\perp})\sp{\perp}$. In particular, $e\mapsto e\sp{\perp}$
is an involution on $E$, so Remarks \ref{rm:involution} apply to $E$. If
the poset $E$ is a lattice, we say that $E$ is \emph{lattice ordered} or
that it is a \emph{lattice effect algebra} \cite{ZRBlocks}.

\begin{example} \label{ex:[0,1]}
{\rm The real closed unit interval $[0,1]\subseteq\reals$ is organized into a
lattice effect algebra by defining $e\sp{\perp}:=1-e$, and by defining
$e\perp f$ iff $e+f\leq 1$, in which case $e\oplus f:=e+f$ for all $e,f
\in[0,1]$. Then the induced partial order coincides with the restriction
to $[0,1]$ of the usual total order on $\reals$.}
\end{example}

\begin{example} \label{ex:StandardEA}
{\rm The prototype for an effect algebra is as follows: Let $\mathfrak H$
be a Hilbert space with inner product $\langle\cdot,\cdot\rangle$ and let
${\mathcal B}({\mathfrak H})$ be the algebra of bounded linear operators
on ${\mathfrak H}$. As usual, the system ${\mathcal B}\sp{sa}({\mathfrak H})$
of self-adjoint operators in ${\mathcal B}({\mathfrak H})$ is partially
ordered by $S\leq T$ iff $x\in{\mathfrak H}\Rightarrow0\leq\langle(T-S)x,x
\rangle$ for $S,T\in{\mathcal B}\sp{sa}({\mathfrak H})$. Then, denoting the
identity operator on ${\mathfrak H}$ by $1$, we define the \emph{standard
effect algebra on ${\mathfrak H}$} to be the set ${\mathcal E}({\mathfrak H})
:=\{S\in{\mathcal B}\sp{sa}({\mathfrak H}):0\leq S\leq 1\}$ with $S\sp{\perp}
:=1-S$ and $S\perp T$ iff $S+T\leq 1$, in which case $S\oplus T:=S+T$.
Verification that ${\mathcal E}({\mathfrak H})$ is an effect algebra and
that the induced partial order is the restriction to ${\mathcal E}
({\mathfrak H})$ of the partial order on ${\mathcal B}\sp{sa}({\mathfrak H})$
is straightforward. The positive-operator-valued measures featured in the
theory of quantum measurement {\rm\cite[page 9]{BLM}}, take their values in
${\mathcal E}({\mathfrak H})$.}
\end{example}

\begin{example} \label{ex:OMLEA}
{\rm Suppose that $L$ is an orthomodular lattice with orthocomplementation
$p\mapsto p\sp{\perp}$. Then $L$ is organized into an effect algebra
$(L;0,1,\sp{\perp},\oplus)$ as follows{\rm:} For $p,q\in L$, $p\perp q$
iff $p\leq q\sp{\perp}$, in which case $p\oplus q:=p\vee q$. Then the
induced partial order on $L$ coincides with the original partial order
on the OML $L$, whence $L$ is a lattice effect algebra. Moreover, if two
elements in $L$ are orthogonal, then they are disjoint.}
\end{example}

Let $E$ and $F$ be effect algebras. Then an \emph{effect-algebra morphism}
from $E$ to $F$ is a mapping $\phi\colon E\to F$ such that $\phi(1)=1$ and
for all $e,f\in E$, $e\perp f\Rightarrow \phi(e)\perp\phi(f)\text{\ and\ }
\phi(e\oplus f)=\phi(e)\oplus\phi(f)$.  An \emph{effect-algebra isomorphism}
from $E$ onto $F$ is a bijective effect-algebra morphism $\phi\colon E\to F$
such that $\phi\sp{-1}\colon F\to E$ is also an effect-algebra morphism.

Let $E$ be an effect algebra. A subset $F\subseteq E$ is called a \emph
{sub-effect algebra} of $E$ iff (1) $0\in F$, (2) $f\in F\Rightarrow f
\sp{\perp}\in F$, and (3) if $f\sb{1}, f\sb{2}\in F$ and $f\sb{1}\perp f
\sb{2}$, then $f\sb{1}\oplus f\sb{2}\in F$. As is easily seen, a sub-effect
algebra $F$ of $E$ is an effect algebra in its own right under the restriction
to $F$ of the operations on $E$, and as such, the induced partial order on $F$
is the restriction to $F$ of the induced partial order on $E$.

A triple $e\sb{1}, e\sb{2}, e\sb{3}$ in the effect algebra $E$ is said to
be \emph{orthogonal} iff $(e\sb{1}\oplus e\sb{2})\oplus e\sb{3}$ is defined,
in which case we write $\bigoplus\sb{i=1}\sp{3}e\sb{i}=e\sb{1}\oplus e\sb{2}
\oplus e\sb{3}:=(e\sb{1}\oplus e\sb{2})\oplus e\sb{3}=e\sb{1}\oplus(e\sb{2}
\oplus e\sb{3})$. Continuing in an obvious way by induction, we define \emph
{orthogonality} and the \emph{orthogonal sum} $\bigoplus\sb{i=1}\sp{n}e\sb{i}$
of a finite orthogonal sequence $(e\sb{i})\sb{i=1}\sp{n}\in E$. An infinite
sequence $(e\sb{i})\sb{i=1}\sp{\infty}$ is said to be \emph{orthogonal} iff
every finite subsequence of $(e\sb{i})\sb{i=1}\sp{\infty}$ is orthogonal. An
orthogonal sequence $(e\sb{i})\sb{i=1}\sp{\infty}\in E$ is called \emph
{orthosummable} with \emph{orthosum} $\bigoplus\sb{i=1}\sp{\infty}e\sb{i}
:=\bigvee\sb{n=1}\sp{\infty}\left(\bigoplus\sb{i=1}\sp{n}e\sb{i}\right)$ iff
the supremum exists. The effect algebra $E$ is said to be \emph
{$\sigma$-orthocomplete} iff every orthogonal sequence in $E$ is orthosummable
\cite{JPOC}. Clearly, if $E$ is monotone $\sigma$-complete, then it is
$\sigma$-orthocomplete.

\begin{definition} \label{df:EAState}
A \emph{state} on the effect algebra $E$ is a mapping $\omega\colon E
\to\reals\sp{+}$ such that (1) $\omega(1)=1$ and (2) if $e,f\in E$ and $e
\perp f$, then $\omega(e\oplus f)=\omega(e)+\omega(f)$. The set of all
states on $E$, denoted by $S(E)$, is called the \emph{state space} of
$E$. A state $\omega\in S(E)$ is said to be \emph{$\sigma$-additive} iff
for every ascending sequence $(e\sb{n})\sb{n=1}\sp{\infty}\subseteq E$,
$e\sb{n}\nearrow e\in E\Rightarrow \omega(e\sb{n})\nearrow\omega(e)$.
\end{definition}

Suppose that $\omega\in S(E)$ is $\sigma$-additive and that $(e\sb{n})\sb{n=1}
\sp{\infty}$ is an orthosummable orthogonal sequence in $E$. Then $\bigoplus
\sb{i=1}\sp{n}e\sb{i}\nearrow\bigoplus\sb{i=1}\sp{\infty}e\sb{i}$, whence
$\sum\sb{i=1}\sp{n}\omega(e\sb{i})=\omega\left(\bigoplus\sb{i=1}\sp{n}e\sb{i}
\right)\nearrow\omega\left(\bigoplus\sb{i=1}\sp{\infty}e\sb{i}\right)$,
so $\omega\left(\bigoplus\sb{i=1}\sp{\infty}e\sb{i}\right)=\sum\sb{i=1}
\sp{\infty}\omega(e\sb{i})$.

If the closed real unit interval $[0,1]\subseteq\reals$ is organized
into an effect algebra as in Example \ref{ex:[0,1]}, then a state
$\omega\in S(E)$ is the same thing as an effect-algebra morphism
$\omega\colon E\to[0,1]$. There are effect algebras with empty state
spaces \cite{GrNoStates}. If $\omega, \tau\in S(E)$ and $\lambda\in[0,1]$,
then $\lambda\omega+(1-\lambda)\tau\in S(E)$, i.e., $S(E)$ is a convex set.

\begin{definition} \label{df:Propsofstates}
For the effect algebra $E$, the set of all extreme points of $S(E)$ will
be denoted by $\Ext(S(E))$, and a state $\omega\in\Ext(S(E))$ will be
called an  \emph{extremal state}.
\end{definition}

If $L$ is an OML, then by organizing $L$ into an effect algebra as in
Example \ref{ex:OMLEA} and applying Definition \ref{df:EAState}, we
obtain the notion of a state on $L$; thus, $\omega\in S(L)$ iff
$\omega\colon L\to\reals\sp{+}$, $\omega(1)=1$, and for all $p,q\in L$,
$p\perp q\Rightarrow\omega(p\vee q)=\omega(p)+\omega(q)$. As mentioned above,
a Boolean algebra $B$ is the same thing as a distributive OML; whence
$\omega\in S(B)$ iff $\omega\colon B\to\reals\sp{+}$, $\omega(1)=1$, and
$\omega(p\vee q)=\omega(p)+\omega(q)$ whenever $p,q\in B$ and $p\wedge q=0$.
In other words, a state on $B$ is what is usually called a finitely additive
probability measure on $B$.

MV-algebras, which were originally introduced by C.C. Chang \cite{Chang}
to serve as algebraic models for many-valued logics, are mathematically
equivalent to a certain kind of effect algebra. An \emph{MV-algebra}
\cite[page 558]{JencaMV} is an algebra $(M;0,1,\sp{\perp},+)$ consisting of
two constants $0$ and $1$, a mapping $\sp{\perp}\colon M\to M$, and a
binary composition $+$ on $M$ such that, for all $x,y,z\in M$: (1) $x+
(y+z)=(x+y)+z$, (2) $x+y=y+x$, (3) $x+0=x$, (4) $(x\sp{\perp})\sp{\perp}=x$,
(5) $0\sp{\perp}=1$, (6) $x+x\sp{\perp}=1$, and (7) $x+(x+y\sp{\perp})
\sp{\perp}=y+(y+x\sp{\perp})\sp{\perp}$. For short, we usually write the
MV-algebra $(M;0,1,\sp{\perp},+)$ simply as $M$. The MV-algebra $M$ is
organized into a poset by defining $x\leq y\Leftrightarrow y=x+(x+y\sp{\perp})
\sp{\perp}$ for $x,y\in M$, and then $M$ is a bounded distributive lattice
with $x\vee y=x+(x+y\sp{\perp})\sp{\perp}$. As is well known, a Boolean algebra
$B$ is the same thing as an MV-algebra $B$ that satisfies the condition $x+x=x$
for all $x\in B$

An effect algebra $E$ is called an \emph{MV-effect algebra} iff it is
lattice ordered and every disjoint pair of elements in $E$ is an orthogonal
pair. For instance, the effect algebra $[0,1]\subseteq\reals$ in Example
\ref{ex:[0,1]} is an MV-effect algebra.

If $M$ is an MV-algebra, then $M$ is organized into an MV-effect algebra
$(M;0,1,\sp{\perp},\oplus)$ by defining $x\perp y$ iff $x\leq y\sp{\perp}$,
in which case $x\oplus y:=x+y$ for $x,y\in M$. Conversely, if $E$ is an
MV-effect algebra, then $E$ is organized into an MV-algebra by defining
$e+f:=e\oplus(e\sp{\perp}\wedge f)$ for $e,f\in E$. In this way, \emph
{MV-algebras and MV-effect algebras are mathematically equivalent} \cite
[\S 5]{FMVH}. Thus, Definition \ref{df:EAState} provides the definition of a
state on an MV-algebra.

\section{States on order-unit normed spaces} \label{sc:OUNS}

A \emph{partially ordered abelian group} (an \emph{abelian pogroup} for
short) is an abelian group $G$ equipped with a partial order relation
$\leq$ such that, for all $a,b,c\in G$, $a\leq b\Rightarrow a+c\leq
b+c$. (We use additive notation for abelian pogroups.) Let $G$ be an
abelian pogroup. Since the mapping $a\mapsto -a$ is an involution on
$G$, Remarks \ref{rm:involution} apply. If, as a poset, $G$ is a lattice,
it is called an \emph{abelian $\ell$-group}. The \emph{positive cone} in
$G$ is denoted and defined by $G\sp{+}:=\{a\in G:0\leq a\}$. If, for every
$a\in G$, the condition that $\{na:n\in \Nat\}$ is bounded above in $G$
implies that $-a\in G\sp{+}$ (i.e., $a\leq 0$), then $G$ is said to be
\emph{Archimedean}. An element $v\in G\sp{+}$ is called an \emph{order unit}
\cite[page 4]{Good} (or a \emph{strong unit}) iff, for every $a\in G$, there
exists $n\in\Nat$ such that $a\leq nv$, and an abelian pogroup with a
distinguished order unit $v$ is denoted by $(G,v)$ and referred to as a
\emph{unital abelian pogroup}. Let $(G,v)$ be a unital abelian pogroup.
Obviously, $G$ is directed. Moreover, if $a\in G$, there exist $b,c\in G
\sp{+}$ such that $a=b-c$; for instance, choose $n\in\Nat$ with $a\leq nv$,
put $b=nv$ and put $c=nv-a$.

\begin{definition} \label{df:AdditiveEct.}
Let $G$ and $H$ be partially ordered abelian pogroups and let $\xi\colon G
\to H$. Then: (1) $\xi$ is \emph{positive} iff for all $a\in G\sp{+}$, $\xi
(a)\in H\sp{+}$. (2) If $(G,v)$ and $(H,w)$ are unital abelian pogroups,
then $\xi$ is \emph{normalized} iff $\xi(v)=w$. (3) If $G\sb{1}\subseteq G$ is
closed under addition then $\xi\sb{1}\colon G\sb{1}\to H$ is \emph{additive}
iff $\xi\sb{1}(a+b)=\xi\sb{1}(a)+\xi\sb{1}(b)$ for all $a,b\in G\sb{1}$.
\end{definition}

\noindent With the notation of Definition \ref{df:AdditiveEct.}, it is
clear that $\xi\colon G\to H$ is a group homomorphism iff it is additive,
and if $\xi$ is additive and positive then it is order preserving.

A \emph{partially ordered linear space} is a linear (or vector) space
$V$ over $\reals$ such that (1) as an additive group, $V$ is an abelian
pogroup and (2) if $\alpha\in\reals\sp{+}$, then $a\leq b\Rightarrow
\alpha a\leq\alpha b$. Thus, the notions defined above for abelian
pogroups apply to a partially ordered linear space and the results
in \cite{Good} for abelian pogroups are applicable to $V$. Let $V$ be
a partially ordered linear space. If as a poset, $V$ is a lattice (i.e.,
$V$ is an abelian $\ell$-group), then $V$ is called a \emph{vector
lattice} or a \emph{Riesz space}. If $v\in V\sp{+}$ is an order unit,
then the corresponding \emph{order-unit norm} on $V$ is defined and
denoted by $\|a\|:=\inf\{0<\lambda\in\reals:-\lambda v\leq a\leq\lambda v\}$
for all $a\in V$.

An Archimedean partially ordered linear space $V$ with a distinguished
order-unit $v$ and hosting the corresponding order-unit norm $\|\cdot\|$
is called an \emph{order-unit normed space} and denoted by $(V,v)$ \cite
[pp. 67--69]{Alf}. According to \cite[Proposition II.1.2]{Alf} and \cite
[Proposition 7.12 (c)]{Good}, the order-unit norm on an order-unit normed
space $(V,v)$ satisfies the following conditions:
\[
\text{If\ }a,b\in V,\text{\ then\ }-\|a\|\leq a\leq\|a\|\text{\ and\ }-b\leq a
 \leq b\Rightarrow\|a\|\leq\|b\|.
\]

We note that the ordered field $\reals$ of real numbers forms a one-dimensional
order-unit normed space $(\reals,1)$, and the corresponding order-unit norm is
just the absolute value $|\cdot|$ on $\reals$. If $X$ is a compact Hausdorff
space, then as usual, $C(X,\reals)$ denotes the commutative associative real
linear algebra under pointwise addition, multiplication, and multiplication by
real scalars, of all continuous functions $f\colon X\to\reals$. Under pointwise
partial order, $(C(X,\reals),1)$ is an order-unit normed space, where $1$ denotes
the constant function $x\mapsto 1$; the corresponding order-unit norm coincides
with the supremum (also called the uniform) norm $\|f\|:=\sup\{|f(x)|:x\in X\}$
for all $f\in C(X,\reals)$; and with this norm, $C(X,\reals)$ is a Banach algebra.
In what follows, we always regard $\reals$ and $C(X,\reals)$ as order-unit normed
spaces $(\reals,1)$ and $(C(X,\reals),1)$, respectively.

\begin{lemma} \label{le:HomLin}
Let $(V,v)$ and $(W,w)$ be order-unit normed spaces, let $\xi\colon
V\to W$ be a positive group homomorphism. Then $\xi$ is a positive linear
mapping from $V$ to $W$.
\end{lemma}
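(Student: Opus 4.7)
My plan is to verify $\xi(\alpha a) = \alpha\xi(a)$ for all $\alpha\in\reals$ and $a\in V$, since additivity is given. I would proceed in the standard ``extend scalars from $\integers$ to $\rationals$ to $\reals$'' fashion, using the Archimedean property of $W$ at the final step.

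First, since $\xi$ is a group homomorphism, repeated addition yields $\xi(na)=n\xi(a)$ for all $n\in\integers$, and the usual trick of applying this to $a/m$ gives $\xi(qa)=q\xi(a)$ for all $q\in\rationals$. Next I would reduce to the case $a\in V\sp{+}$ and $\alpha\in\reals\sp{+}$: using additivity together with the fact that every element of $V$ has the form $b-c$ with $b,c\in V\sp{+}$ (here I invoke the observation in the paragraph following the definition of a unital abelian pogroup, which is applicable since $V$ has order unit $v$), and handling the sign of $\alpha$ via $\xi(-x)=-\xi(x)$.

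So suppose $a\in V\sp{+}$ and $\alpha\in\reals\sp{+}$, and set $d:=\xi(\alpha a)-\alpha\xi(a)$. For any rationals $q\sb{1}\leq\alpha\leq q\sb{2}$, we have $q\sb{1}a\leq\alpha a\leq q\sb{2}a$ in $V$ (using $a\in V\sp{+}$ and the scalar-monotonicity axiom for partially ordered linear spaces). Since $\xi$ is positive and additive (hence order-preserving) and already $\rationals$-linear, applying $\xi$ gives $q\sb{1}\xi(a)\leq\xi(\alpha a)\leq q\sb{2}\xi(a)$, so
\[
(q\sb{1}-\alpha)\xi(a)\ \leq\ d\ \leq\ (q\sb{2}-\alpha)\xi(a).
\]
Choosing $q\sb{2}=\alpha+1/n$ and $q\sb{1}=\alpha-1/n$ for $n\in\Nat$, I obtain $nd\leq\xi(a)$ and $-nd\leq\xi(a)$ for every $n\in\Nat$. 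Thus $\{nd:n\in\Nat\}$ and $\{-nd:n\in\Nat\}$ are both bounded above in $W$, and the Archimedean property of $W$ forces $d\leq 0$ and $-d\leq 0$, whence $d=0$.

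The main obstacle is precisely this last squeeze: bare $\rationals$-linearity plus positivity only gives two-sided rational bounds on $\xi(\alpha a)$, and one needs some topological or order-theoretic completeness ingredient to pass to the limit. The Archimedean hypothesis on $W$ is exactly what makes the standard squeeze argument go through, and every other step is routine bookkeeping with additivity and the decomposition of elements of $V$ into differences of positive elements.
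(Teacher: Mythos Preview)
Your argument is correct and follows essentially the same squeeze as the paper's proof; the only cosmetic difference is that the paper packages the final step through the order-unit norm (obtaining $\|\beta\xi(a)-\xi(\beta a)\|\leq(\gamma-\alpha)\|\xi(a)\|$ and letting $\gamma-\alpha\to 0$), whereas you invoke the Archimedean property of $W$ directly. One small slip: the choices $q\sb{1}=\alpha-1/n$ and $q\sb{2}=\alpha+1/n$ need not be rational when $\alpha$ is irrational, but picking rationals $q\sb{1}\in[\alpha-1/n,\alpha]$ and $q\sb{2}\in[\alpha,\alpha+1/n]$ fixes this without affecting the rest of the argument.
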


\begin{proof}
As in the proof of \cite[Lemma 6.7]{Good}, $\xi(\alpha a)=\alpha\xi(a)$
for every rational number $\alpha$ and every $a\in A$. Let $a\in V\sp{+}$,
let $\beta\in\reals$, and let $\alpha$ and $\gamma$ be rational numbers
with $\alpha\leq\beta\leq\gamma$. Then $\alpha\xi(a)=\xi(\alpha a)\leq\xi
(\beta a)\leq\xi(\gamma a)=\gamma(\xi a)$, and since $0\leq\xi(a)$, it
follows that $-(\gamma-\alpha)\xi(a)\leq\beta\xi(a)-\xi(\beta a)\leq(\gamma-
\alpha)\xi(a)$, whence $\|\beta\xi(a)-\xi(\beta a)\|\leq\|(\gamma-\alpha)
\xi(a)\|=(\gamma-\alpha)\|\xi(a)\|$. Letting $(\gamma-\alpha)\rightarrow 0$,
we conclude that $\beta\xi(a)-\xi\beta(a)=0$, i.e., $\xi(\beta a)=\beta\xi(a)$.
Since every element in $V$ is a difference of two elements in $V\sp{+}$, it
follows that $\xi$ is linear.
\end{proof}

\begin{definition} \label{df:UnitInterval}
If $(V,v)$ is an order-unit normed space, then the ``unit interval"
$E(V,v):=\{a\in V:0\leq a\leq v\}$ is organized into a convex effect
algebra \cite{BBGP, BGPconvex} $(E(V,v);0,v,\sp{\perp},\oplus)$ as follows:
For $a,b\in E(V,v)$, $a\sp{\perp}:=v-a$ and $a\perp b$ iff $a+b\leq v$, and
then $a\oplus b:=a+b$.
\end{definition}

\noindent In Definition \ref{df:UnitInterval}, it is easy to see that the
induced partial order on $E(V,v)$ is the just the restriction of the partial
order on $V$.

Recall that a mapping from one convex set to another is said to be
\emph{affine} iff it preserves convex combinations, and an \emph{affine
isomorphism} is a bijective affine mapping with an inverse that is also
an affine mapping.

\begin{theorem} [{\rm Cf. \cite[Corollary 14.21]{DvPu}}]
\label{th:restriction/extension}
Let $(V,v)$ and $(W,w)$ be order-unit normed spaces. Then there is
an affine bijection $\xi\leftrightarrow\omega$ between positive
normalized linear mappings $\xi\colon V\to W$ and effect-algebra
morphisms $\omega\colon E(V,v)\to E(W,w)$ provided by restriction
and extension. Moreover, if $\xi\leftrightarrow\omega$, then $\xi
\colon V\to W$ is an isomorphism of order-unit normed linear
spaces iff $\omega\colon E(V,v)\to E(W,w)$ is an effect-algebra
isomorphism.
\end{theorem}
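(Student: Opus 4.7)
The plan is to produce the two directions of the correspondence---restriction one way, linear extension the other---and then to verify that they are mutually inverse and affine.

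Restriction is routine. Given a positive normalized linear $\xi\colon V\to W$, positivity combined with $\xi(v)=w$ gives $\xi(E(V,v))\subseteq E(W,w)$; additivity of $\xi$ then yields $\xi(a\oplus b)=\xi(a)+\xi(b)=\xi(a)\oplus\xi(b)$ whenever $a\perp b$ in $E(V,v)$, so $\omega:=\xi|_{E(V,v)}$ is an effect-algebra morphism.

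The extension is the substantive part and I expect it to be the main obstacle. Given an effect-algebra morphism $\omega\colon E(V,v)\to E(W,w)$, I will first extend to $V\sp{+}$ and then to $V$. For $a\in V\sp{+}$, since $v$ is an order unit, I choose $n\in\Nat$ with $a\leq nv$, so $(1/n)a\in E(V,v)$, and set $\tilde\omega(a):=n\,\omega((1/n)a)$. The delicate check is that this is independent of $n$: if also $a\leq mv$, pick any common multiple $k$ of $n$ and $m$ and write $k=np=mq$; then $(1/k)a\in E(V,v)$, and $p$ copies of $(1/k)a$ orthosum inside $E(V,v)$ to $(1/n)a$ (every partial sum is $\leq(1/n)a\leq v$), so by effect-algebra additivity of $\omega$ one obtains $\omega((1/n)a)=p\,\omega((1/k)a)$, and hence $n\,\omega((1/n)a)=k\,\omega((1/k)a)=m\,\omega((1/m)a)$. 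Additivity of $\tilde\omega$ on $V\sp{+}$ then follows by choosing $n$ large enough that $(1/n)a$, $(1/n)b$, and $(1/n)(a+b)$ all lie in $E(V,v)$ with $(1/n)a\perp(1/n)b$ there. Since $(V,v)$ is directed, every $a\in V$ can be written as $b-c$ with $b,c\in V\sp{+}$; I set $\xi(a):=\tilde\omega(b)-\tilde\omega(c)$, and additivity on $V\sp{+}$ makes this well defined and produces a positive group homomorphism with $\xi(v)=w$. Lemma~\ref{le:HomLin} then upgrades $\xi$ automatically to a positive normalized linear map.

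The two operations are mutually inverse: extending the restriction of a linear $\xi$ reproduces $\xi$, by linearity together with the fact that every $a\in V\sp{+}$ satisfies $a=n\cdot((1/n)a)$, while restricting the extension of $\omega$ is just the construction with $n=1$. Both assignments are visibly affine in their respective arguments, yielding the claimed affine bijection. For the isomorphism statement, one direction is immediate: if $\xi$ is an order-unit-normed-space isomorphism, then $\xi\sp{-1}$ is also positive normalized linear, and restricting the pair gives mutually inverse effect-algebra morphisms. Conversely, if $\omega$ is an effect-algebra isomorphism and $\xi,\xi\sp{\prime}$ are the linear extensions of $\omega,\omega\sp{-1}$, then $\xi\sp{\prime}\circ\xi$ and $\xi\circ\xi\sp{\prime}$ are positive normalized linear maps whose restrictions to the unit intervals are the identity morphisms; by the injectivity already established, these compositions must be the identities on $V$ and $W$. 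Thus $\xi$ is a bijective positive linear map with positive inverse, and since the order-unit norm is recovered from the order and the order unit, $\xi$ is automatically isometric and hence an isomorphism of order-unit normed spaces.
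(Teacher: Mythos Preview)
Your proof is correct and follows essentially the same route as the paper: restrict in one direction; in the other, extend $\omega$ first to $V\sp{+}$ via $a\mapsto n\,\omega((1/n)a)$, check well-definedness and additivity, extend to $V$ by differences, and invoke Lemma~\ref{le:HomLin} to get linearity. The only cosmetic difference is that the paper packages the well-definedness check into two preliminary identities (namely $\omega((1/m)e)=(1/m)\omega(e)$ and $ne=mf\Rightarrow n\omega(e)=m\omega(f)$), whereas you argue directly via a common multiple of $n$ and $m$; and you spell out the isomorphism clause in more detail than the paper, which simply declares it straightforward.
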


\begin{proof}
Obviously, the restriction $\omega$ of a positive normalized linear
mapping $\xi\colon V\to W$ is an effect-algebra morphism $\omega
\colon E(V,v)\to E(W,w)$. Conversely, suppose that $\omega\colon
E(V,v)\to E(W,w)$ is an effect-algebra morphism. If
$e\in E(V,v)$ and $m\in\Nat$, then $(1/m)e\in E(V,v)$ and $\omega(e)
=\omega(m(1/m)e)=m\omega((1/m)e)$, whence
\setcounter{equation}{0}
\begin{equation} \label{eq:omega1}
e\in E(V,v)\text{\ and\ }m\in\Nat\Rightarrow\omega((1/m)e)=(1/m)
\omega(e).
\end{equation}
If $e,f\in E(V,v)$ and $n,m\in\Nat$ with $ne=mf$, then $(1/m)e=(1/n)f$,
and it follows from (\ref{eq:omega1}) that $(1/m)\omega(e)=(1/n)\omega(f)$.
Thus,
\begin{equation} \label{eq:omega2}
e,f\in E(V,v),\ n,m\in\Nat,\text{\ and\ }ne=mf\Rightarrow n\omega(e)=
m\omega(f).
\end{equation}
We propose to extend $\omega$ to $\omega\sp{+}\colon V\sp{+}\to W
\sp{+}$. Indeed, suppose that $a\in V\sp{+}$ and choose $n\in\Nat$ with
$a\leq nv$. Then $e:=(1/n)a\in E(V,v)$ and we define $\omega\sp{+}(a):=
n\omega(e)\in W\sp{+}$. By (\ref{eq:omega2}), $\omega\sp{+}$ is well-defined,
and obviously it is an extension of $\omega$. If $a,b\in V\sp{+}$,
choose $n\in\Nat$ with $a+b\leq nv$. Then $a,b\leq a+b\leq nv$, whence
$\omega\sp{+}(a+b)=n\omega((1/n)(a+b))=n[\omega((1/n)a)+\omega((1/n)b)]
=n\omega(1/n)a+n\omega(1/n)b=\omega\sp{+}(a)+\omega\sp{+}(b)$.
Consequently, $\omega\sp{+}$ is additive on $V\sp{+}$.

Now we extend $\omega\sp{+}$ to $\xi\colon V\to W$. Indeed, let $a\in
V$ and write $a=b-c$ with $b,c\in V\sp{+}$. We define $\xi\colon V\to
W$ by $\xi(a)=\omega\sp{+}(b)-\omega\sp{+}(c)$. Because $\omega\sp{+}$
is additive on $V\sp{+}$, $\xi$ is well-defined and it is additive on
$V$. Obviously, $\xi$ is an extension of $\omega\sp{+}$, hence also an
extension of $\omega$. Thus $\xi$ is positive and normalized, whence by
Lemma \ref{le:HomLin}, $\xi\colon V\to W$ a positive normalized linear
mapping. The proof that the correspondence $\xi\leftrightarrow\omega$
is bijective and that it preserves convex combinations in both directions
is straightforward, as is the final statement in the theorem.
\end{proof}

\begin{definition} \label{df:StateonV}
Let $(V,v)$ be an order-unit normed space. Then a linear mapping
$\rho\colon V\to\reals$ is a \emph{state} on $(V,v)$ iff (1) it is
positive (i.e., $a\in V\sp{+}\Rightarrow\rho(a)\in\reals\sp{+}$) and
(2) it is normalized (i.e., $\rho(v)=1$). The set of all states on
$(V,v)$, called the \emph{state space of} $(V,v)$, is denoted by $S(V,v)$,
or simply by $S(V)$ if $v$ is understood. Clearly, $S(V)$ is a convex set,
and the set of extreme points of $S(V)$ is denoted by $\Ext(S(V))$. A
state $\rho\in\Ext(S(V))$ is called an \emph{extremal state}.
\end{definition}

See \cite[Corollary II.1.5 and Proposition II.1.7]{Alf} for a proof of
the following theorem.

\begin{theorem} \label{th:FnlProps}
Let $(V,v)$ be an order-unit normed space and let $\rho\colon V\to\reals$
be a nonzero linear functional on $V$.
Then{\rm:}
\begin{enumerate}
\item  If $a\in V$, then $a\in V\sp{+}$ iff $0\leq\rho(a)$ for all
 $\rho\in S(V)$.
\item If $a\in V$, then $\|a\|=\sup\{|\rho(a)|:\rho\in S(V)\}$.
\item $\rho$ is positive iff it is bounded with $\|\rho\|=\rho(v)$.
\item $\rho\in S(V)$ iff $\rho$ is bounded and $\|\rho\|=\rho(v)=1$.
\end{enumerate}
\end{theorem}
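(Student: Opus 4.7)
The four parts are interrelated, and I would prove them in the order (iii), (iv), (i), (ii). Throughout I use the properties of the order-unit norm recorded just before the statement, in particular $\|v\|=1$ and the sandwich $-\|a\|v\leq a\leq\|a\|v$ for every $a\in V$.

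For (iii), the forward direction is immediate: if $\rho$ is positive, applying $\rho$ to $-\|a\|v\leq a\leq\|a\|v$ yields $|\rho(a)|\leq\|a\|\rho(v)$, so $\|\rho\|\leq\rho(v)$; combined with $\rho(v)\leq\|\rho\|\|v\|=\|\rho\|$, equality holds. For the converse, suppose $\rho$ is bounded with $\|\rho\|=\rho(v)$. The key trick is that for $a\in E(V,v)$ one has $-v\leq v-2a\leq v$, so $\|v-2a\|\leq 1$ and hence $|\rho(v)-2\rho(a)|\leq\rho(v)$, which forces $\rho(a)\geq 0$. For arbitrary $a\in V\sp{+}$, the order-unit property gives $n\in\Nat$ with $a\leq nv$, so $a/n\in E(V,v)$ and positivity propagates by homogeneity. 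Part (iv) is then immediate from (iii) together with the normalization $\rho(v)=1$.

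For the converse direction of (i), I argue by contraposition via Hahn--Banach separation. Two preliminaries are needed. First, $V\sp{+}$ is norm closed: if $a\sb n\to a$ with each $a\sb n\geq 0$, then $-\|a-a\sb n\|v\leq a-a\sb n$ gives $a\geq -\|a-a\sb n\|v$ for every $n$, and the Archimedean property forces $a\geq 0$. Second, the order unit $v$ lies in the norm-interior of $V\sp{+}$, since $\|b-v\|<1$ forces $b\geq (1-\|b-v\|)v\geq 0$. Now if $a\notin V\sp{+}$, Hahn--Banach strictly separates the point $a$ from the closed convex cone $V\sp{+}$, producing a continuous linear functional $\phi$ with $\phi(a)<0$; the standard cone-scaling argument (using that $\lambda b\in V\sp{+}$ for every $b\in V\sp{+}$ and every $\lambda>0$) then forces $\phi\geq 0$ on $V\sp{+}$. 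Finally $\phi(v)>0$: since $\phi\neq 0$ and $V$ is directed, there is some $c\in V\sp{+}$ with $\phi(c)>0$, and $c\leq nv$ then yields $n\phi(v)\geq\phi(c)>0$. Normalizing by $\phi(v)$ produces a state $\rho\in S(V)$ with $\rho(a)<0$, contradicting the hypothesis.

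For (ii), the inequality $\sup\{|\rho(a)|:\rho\in S(V)\}\leq\|a\|$ is immediate from (iv). Conversely, set $\kappa:=\sup\{|\rho(a)|:\rho\in S(V)\}$; every $\rho\in S(V)$ then satisfies $\rho(\kappa v\pm a)\geq 0$, so by (i) both $\kappa v-a$ and $\kappa v+a$ lie in $V\sp{+}$, giving $-\kappa v\leq a\leq\kappa v$ and hence $\|a\|\leq\kappa$. The main obstacle in the whole argument is (i), which rests essentially on Hahn--Banach separation together with the two cone-theoretic preliminaries above; once (i) and (iii) are in hand, (ii) and (iv) are short deductions.
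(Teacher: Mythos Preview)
Your argument is correct. The paper, however, does not supply its own proof of this theorem; it merely refers the reader to \cite[Corollary II.1.5 and Proposition II.1.7]{Alf}. So there is no ``paper's proof'' to compare against in detail---your proposal is in fact a self-contained demonstration of a result the authors chose to quote.

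That said, your line of argument is essentially the standard one found in Alfsen and in Goodearl \cite[Chapter 4]{Good}: part (iii) via the $v-2a$ trick on the unit interval, part (i) via closedness of $V\sp{+}$, interior point $v$, and Hahn--Banach separation, and then (ii) as a corollary of (i). The two minor points worth making explicit are: in (ii) you silently use that $S(V)$ is nonempty, which your proof of (i) indeed establishes (take any $a\notin V\sp{+}$, for instance $a=-v$); and in the separation step the inequality $\phi(a)<0$ is obtained only after the cone-scaling argument shows $\inf\sb{b\in V\sp{+}}\phi(b)=0$, not directly from separation---you have the pieces in the right order, but the phrasing ``producing $\phi$ with $\phi(a)<0$'' slightly anticipates the conclusion. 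Neither point is a gap.
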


\noindent As a consequence of parts (i) and (ii) of Theorem \ref
{th:FnlProps}, the states on $(V,v)$ determine both the partial order
$\leq$ and the order-unit norm $\|\cdot\|$ on $V$. Using the Hahn-Banach
extension theorem \cite[Theorem 1.6.1]{KR} and part (iii) of Theorem
\ref{th:FnlProps}, we obtain the following corollary.

\begin{corollary} \label{co:extendstate}
Let $(V,v)$ be an order-unit normed space, suppose that $V\sb{1}$ is a
linear subspace of $V$ and that $v\in V\sb{1}$. Then, for the order-unit
normed space $(V\sb{1},v)$, we have{\rm: (i)} If $\rho\in S(V)$, then the
restriction $\rho\sb{1}$ of $\rho$ to $V\sb{1}$ is a state on $V\sb{1}$.
{\rm(ii)} Every state $\rho\sb{1}$ on $V\sb{1}$ can be extended to a state
$\rho$ on $V$.
\end{corollary}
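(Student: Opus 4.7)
The plan is to handle (i) by direct verification and (ii) by combining the Hahn--Banach extension theorem with the functional-analytic characterization of states from Theorem~\ref{th:FnlProps}. Before doing either, I would observe that $(V_1,v)$ really is an order-unit normed space under the inherited partial order: $v \in V_1$ is an order unit for $V_1$ because for $a \in V_1 \subseteq V$ some $n \in \Nat$ gives $-nv \leq a \leq nv$, and the Archimedean property is preserved under passage to subspaces. Crucially, the order-unit norm on $V_1$ coincides with the restriction of the order-unit norm on $V$, since the defining set $\{\lambda > 0 : -\lambda v \leq a \leq \lambda v\}$ depends only on comparisons with scalar multiples of $v$ and is the same whether interpreted in $V_1$ or in $V$. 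This identification is the only subtle point and is what guarantees that the norm used by Hahn--Banach is the same norm used in Theorem~\ref{th:FnlProps}.

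For (i), the restriction $\rho_1 := \rho|_{V_1}$ is linear, and it is positive because $V_1^{+} = V_1 \cap V^{+}$, and normalized because $\rho_1(v) = \rho(v) = 1$; so $\rho_1 \in S(V_1)$.

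For (ii), suppose $\rho_1 \in S(V_1)$. Applying Theorem~\ref{th:FnlProps}(iv) inside $(V_1,v)$, $\rho_1$ is bounded with $\|\rho_1\| = \rho_1(v) = 1$. The Hahn--Banach extension theorem \cite[Theorem 1.6.1]{KR} then furnishes a bounded linear functional $\rho \colon V \to \reals$ extending $\rho_1$ with $\|\rho\| = \|\rho_1\| = 1$; since $v \in V_1$, also $\rho(v) = \rho_1(v) = 1$. Thus $\rho$ is bounded and satisfies $\|\rho\| = \rho(v) = 1$, so Theorem~\ref{th:FnlProps}(iii) implies $\rho$ is positive, and then Theorem~\ref{th:FnlProps}(iv) gives $\rho \in S(V)$, as required. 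No serious obstacle arises beyond recognizing that the norm-preserving Hahn--Banach extension automatically upgrades to a positive functional precisely because $\|\rho\| = \rho(v)$ characterizes positivity among bounded linear functionals on an order-unit normed space.
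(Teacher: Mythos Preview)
Your proof is correct and follows essentially the same approach as the paper, which simply indicates that the corollary follows from the Hahn--Banach extension theorem together with Theorem~\ref{th:FnlProps}(iii). You have filled in the details---in particular the observation that the order-unit norm on $V_1$ agrees with the restricted norm from $V$, which is exactly what makes the norm-preserving Hahn--Banach extension compatible with the positivity criterion in Theorem~\ref{th:FnlProps}.
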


According to the next theorem, which is an immediate consequence of Lemma
\ref{le:HomLin}, a state on an order-unit normed space $(V,v)$ is the same
thing as a state on the additive unital pogroup $(V,v)$ \cite[page 60]{Good},
whence all of the results in \cite[Chapters 4 and 6]{Good} are applicable
to $S(V,v)$. In particular, results pertaining to the topological structure
of the state space of a unital abelian pogroup apply to $S(V,v)$, but will
not be considered here.

\begin{theorem} \label{th:stateislinfnl}
If $(V,v)$ is an order-unit normed space, then a state $\rho\in S(V)$ is
the same thing as a positive normalized additive-group homomorphism $\rho
\colon V\to\reals$.
\end{theorem}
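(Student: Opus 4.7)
The theorem asserts an equivalence between two descriptions of a state: a positive normalized linear functional (Definition \ref{df:StateonV}) versus a positive normalized additive-group homomorphism. My plan is to observe that one direction is immediate and the other is essentially a direct specialization of Lemma \ref{le:HomLin}.

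First I would dispatch the easy direction. If $\rho\in S(V)$, then by Definition \ref{df:StateonV}, $\rho$ is a positive normalized linear mapping from $V$ to $\reals$. Since every linear mapping respects addition, $\rho$ is in particular a positive normalized additive-group homomorphism from $V$ to $\reals$. No work is required here.

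For the converse, suppose $\rho\colon V\to\reals$ is a positive normalized additive-group homomorphism. Recall that $\reals$ carries its natural structure as an order-unit normed space $(\reals,1)$, as noted in the discussion preceding Lemma \ref{le:HomLin}. Applying Lemma \ref{le:HomLin} with $(W,w):=(\reals,1)$ and $\xi:=\rho$, we conclude that $\rho$ is a positive linear mapping $V\to\reals$. Combined with $\rho(v)=1$, this says precisely that $\rho\in S(V)$ in the sense of Definition \ref{df:StateonV}.

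There is essentially no obstacle here; the theorem is a clean corollary of Lemma \ref{le:HomLin}, and its real content lies in that lemma, where the passage from $\rationals$-linearity (routine for group homomorphisms) to $\reals$-linearity uses the order-unit norm estimates. The statement of Theorem \ref{th:stateislinfnl} merely records the resulting identification so that the later invocation of the abelian pogroup theory of \cite{Good} is justified.
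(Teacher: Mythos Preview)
Your proposal is correct and matches the paper's approach exactly: the paper simply states that the theorem is an immediate consequence of Lemma~\ref{le:HomLin}, and your argument spells out precisely that specialization with $(W,w)=(\reals,1)$.
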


The following theorem enables us to infer properties of states on $(V,v)$
from properties of states on the effect algebra $E(V,v)$ and \emph{vice
versa}.

\begin{theorem} \label{th:rho&omega}
Let $(V,v)$ be an order-unit normed space. Then there is an affine
bijection $\rho\leftrightarrow\omega$ between states $\rho\in S(V,v)$
and states $\omega\in S(E(V,v))$ given by restriction and extension;
moreover, if $\rho\leftrightarrow\omega$, then $\rho$ is extremal
iff $\omega$ is extremal.
\end{theorem}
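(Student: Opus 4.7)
The plan is to recognize this theorem as an immediate specialization of Theorem \ref{th:restriction/extension} together with the standard fact that affine bijections preserve extreme points. Essentially all the nontrivial content — existence of the extension, additivity, positivity, and linearity — has already been done in the proof of Theorem \ref{th:restriction/extension}; what remains is an identification of objects and a short verification about extreme points.

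First I would apply Theorem \ref{th:restriction/extension} with $(W,w) := (\reals,1)$. That theorem supplies an affine bijection, given by restriction and extension, between positive normalized linear maps $\xi\colon V\to\reals$ and effect-algebra morphisms $E(V,v)\to E(\reals,1)$. By Definition \ref{df:StateonV}, the positive normalized linear maps $V\to\reals$ are precisely the elements of $S(V,v)$. On the other side, since $E(\reals,1)=[0,1]\subseteq\reals$ carries the effect-algebra structure of Example \ref{ex:[0,1]}, the remark recorded just after Definition \ref{df:EAState} identifies effect-algebra morphisms $E(V,v)\to[0,1]$ with states on the effect algebra $E(V,v)$. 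Composing these two identifications with the bijection from Theorem \ref{th:restriction/extension} gives the affine bijection $\rho\leftrightarrow\omega$ claimed here.

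For the extremality statement, I would simply invoke the general fact that any affine bijection between convex sets carries extreme points to extreme points. Concretely, suppose $\rho\leftrightarrow\omega$ and $\rho$ is extremal; if $\omega=\lambda\omega_{1}+(1-\lambda)\omega_{2}$ with $\omega_{1},\omega_{2}\in S(E(V,v))$ and $0<\lambda<1$, let $\rho_{1},\rho_{2}\in S(V,v)$ be the extensions provided by the bijection. Because extension is affine, the extension of $\lambda\omega_{1}+(1-\lambda)\omega_{2}$ is $\lambda\rho_{1}+(1-\lambda)\rho_{2}$, and uniqueness of the extension of $\omega$ forces $\rho=\lambda\rho_{1}+(1-\lambda)\rho_{2}$. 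Extremality of $\rho$ then gives $\rho_{1}=\rho_{2}=\rho$, and restricting back yields $\omega_{1}=\omega_{2}=\omega$. The reverse implication is the mirror image, using that restriction is affine.

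There is essentially no obstacle here; the only point deserving even brief mention is the verification that the bijection of Theorem \ref{th:restriction/extension} really is affine in both directions, but that was asserted in its proof and is routine. Thus the theorem is proved.
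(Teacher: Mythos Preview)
Your proof is correct and follows essentially the same approach as the paper: specialize Theorem~\ref{th:restriction/extension} to $(W,w)=(\reals,1)$, identify both sides with the respective state spaces, and use that an affine bijection preserves extreme points. The paper's version is slightly terser on the extremality step, but your explicit verification is fine.
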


\begin{proof}
In Theorem \ref{th:restriction/extension}, put $(W,w):=(\reals,1)$.
Then a positive normalized linear mapping $\rho$ from $(V,v)$ to
$(W,w)$ is a state on $V$, an effect-algebra morphism $\omega$ from
$E(V,v)$ to $E(W,w)=[0,1]\subseteq\reals$ is a state on $E(V,v)$,
and we have the required affine bijection $\rho\leftrightarrow\omega$.
Because $\rho\leftrightarrow\omega$ is affine, $\rho$ is extremal
iff $\omega$ is extremal.
\end{proof}

If $X$ is a compact Hausdorff space, we recall that a \emph{probability
measure} on $X$, is a nonnegative regular Borel measure $\mu$ on $X$ such
that $\mu(X)=1$.

\begin{theorem} \label{States&ProbMeas}
Let $X$ be a compact Hausdorff space. Then{\rm:}
\begin{enumerate}
\item There is a bijective correspondence $\mu\leftrightarrow\gamma$
 between finite signed regular Borel measures $\mu$ on $X$ and bounded
 linear  functionals $\gamma$ on $C(X,\reals)$ according to $\gamma(f)=
 \int\sb{X}f\,d\mu$ for all $f\in  C(X,\reals)$.
\item For the correspondence $\mu\leftrightarrow\gamma$ in {\rm(i)},
 $\mu$ is a  probability measure on $X$ iff $\gamma$ is a state on
 $C(X,\reals)$.
\item Each $x\in X$ induces a state $\gamma\sb{x}\in S(C(X,\reals))$
 according to $\gamma\sb{x}(f):=f(x)$ for all $f\in C(X,\reals)$,
 and we have $\gamma\sb{x}(f)=\int\sb{X}f\,d\mu\sb{x}$, where $\mu
 \sb{x}$ concentrates measure on the singleton set $\{x\}$, i.e.,
 $\mu\sb{x}(\{x\})=1$.
\item $\Ext(S(C(X,\reals)))=\{\gamma\sb{x}:x\in X\}$.
\end{enumerate}
\end{theorem}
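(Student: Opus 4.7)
The plan is to dispatch parts (i)--(iii) quickly and concentrate on part (iv). Part (i) is the Riesz--Markov--Kakutani representation theorem, which I would simply cite. Part (ii) follows by inspection once the Riesz correspondence is in place: if $\mu\geq 0$ with $\mu(X)=1$ then $\gamma(f)=\int f\,d\mu$ is plainly positive and normalized, and conversely a state $\gamma$ forces $\mu(X)=\gamma(1)=1$ and positivity of $\mu$ (the latter being the standard identification of positive linear functionals with positive measures). Part (iii) is immediate from the defining property $\int f\,d\mu_x=f(x)$ of the Dirac measure together with the obvious linearity, positivity, and normalization of point evaluation.

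For the inclusion $\{\gamma_x:x\in X\}\subseteq\Ext(S(C(X,\reals)))$ in part (iv): any convex decomposition $\gamma_x=\lambda\rho_1+(1-\lambda)\rho_2$ with $\rho_i\in S(C(X,\reals))$ and $\lambda\in(0,1)$ pulls back via (i)--(ii) to $\mu_x=\lambda\mu_1+(1-\lambda)\mu_2$ among probability measures. Since $\mu_x(X\setminus\{x\})=0$ and $\mu_1,\mu_2$ are nonnegative, $\mu_i(X\setminus\{x\})=0$, so $\mu_i(\{x\})=1$, forcing $\mu_i=\mu_x$ and hence $\rho_i=\gamma_x$.

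The reverse inclusion is the main content. Let $\rho\in S(C(X,\reals))$ correspond to the probability measure $\mu$. I prove the contrapositive: if $\mu$ is not a Dirac mass, then $\rho$ is not extremal. In that case the support of $\mu$ contains two distinct points $x_1,x_2$ (otherwise $\mu$ would be concentrated at a single point and hence Dirac). Urysohn's lemma on the compact Hausdorff space $X$ provides $f\in C(X,\reals)$ with $0\leq f\leq 1$, $f(x_1)=0$, and $f(x_2)=1$; continuity then gives open neighborhoods $U_1\ni x_1$ with $f<1/2$ on $U_1$ and $U_2\ni x_2$ with $f>1/2$ on $U_2$, both of positive $\mu$-measure since $x_i\in\mathrm{supp}(\mu)$. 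Hence $\rho(f)=\int f\,d\mu\in(0,1)$. Define
\[
\rho_1(g):=\rho(fg)/\rho(f),\qquad \rho_2(g):=\rho((1-f)g)/(1-\rho(f)).
\]
Routine checks show $\rho_1,\rho_2$ are linear, positive (since $f,1-f\geq 0$), and normalized, hence states, and that $\rho=\rho(f)\rho_1+(1-\rho(f))\rho_2$. If $\rho_1=\rho_2$, then evaluating at $g=f$ and clearing denominators yields $\rho(f^2)=\rho(f)^2$, i.e., $\int(f-\rho(f))^2\,d\mu=0$, which forces $f$ to be $\mu$-a.e.\ constant --- contradicting $f<1/2$ on $U_1$ and $f>1/2$ on $U_2$, both of positive $\mu$-measure. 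Therefore $\rho_1\neq\rho_2$, and $\rho$ is not extremal.

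The main obstacle is this last step: fabricating a genuine convex decomposition of an extremal state whenever the associated measure is not a point mass. The key structural observation is that for any $f\in C(X,\reals)$ with $0\leq f\leq 1$ and $\rho(f)\in(0,1)$, the maps $\rho_1,\rho_2$ above automatically form a convex decomposition of $\rho$, and their coincidence is equivalent to the $\mu$-variance of $f$ being zero; Urysohn's lemma then supplies a function whose variance must be strictly positive as soon as the support of $\mu$ contains more than one point.
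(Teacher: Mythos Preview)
Your argument is correct. On (i)--(iii) you and the paper coincide: Riesz--Markov for (i), inspection for (ii), triviality for (iii). For (iv) the paper simply cites \cite[Proposition~5.24]{Good}, whereas you give a self-contained measure-theoretic proof. Your splitting $\rho_1(g)=\rho(fg)/\rho(f)$, $\rho_2(g)=\rho((1-f)g)/(1-\rho(f))$ is the standard barycentric decomposition, and coupling Urysohn's lemma with the support of $\mu$ to force $\rho_1\neq\rho_2$ via the variance identity $\rho(f^2)-\rho(f)^2=\int(f-\rho(f))^2\,d\mu>0$ is clean. One point worth making explicit is that the support of a \emph{regular} Borel probability measure on a compact Hausdorff space is well defined (the union of all open null sets is null by inner regularity via compact subsets), so your claim that a non-Dirac $\mu$ has at least two support points is indeed justified. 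Your route buys self-containment and a concrete convex decomposition; the paper's citation buys brevity and plugs into Goodearl's order-unit framework, which the paper leans on again (e.g., the lattice-homomorphism characterization of extremal states in Theorem~\ref{th:extreme}).
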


\begin{proof}
Part (i) is a consequence of the Riesz representation theorem. In view of
the fact that $\gamma\in S(C(X,\reals))$ iff $\gamma$ is positive and
normalized, (ii) follows from (i). (See \cite[page 87]{Good}.) Part (iii)
is obvious, and (iv) follows from \cite[Proposition 5.24]{Good}.
\end{proof}

\section{Synaptic and generalized Hermitian algebras} \label{sc:SA}

Synaptic algebras were introduced in \cite{Fsa}. Examples of synaptic
algebras are the self-adjoint part of a von Neumann algebra, of an
AW$\sp{*}$-algebra, or of a Rickart C$\sp{*}$-algebra. Additional examples
of synaptic algebras are: JW-algebras \cite{Top65}, AJW-algebras \cite
[\S\,20]{Top65}, JB-algebras \cite{ASS78}, and the ordered special Jordan
algebras studied by Sarymsakov, \emph{et al.} \cite{Sary}.

To help fix ideas before we proceed, the reader may consider the algebra
${\mathcal{B}}({\mathfrak H})$ of all bounded linear operators on a Hilbert
space ${\mathfrak H}$ (Example \ref{ex:StandardEA}). Then the subset
${\mathcal{B}}\sp{sa}({\mathcal{\frak H}})$ of all bounded self-adjoint
operators in ${\mathcal{B}}({\mathfrak H})$ is the prototypic example of a
synaptic algebra.

In what follows \emph{we assume that $A$ is a synaptic algebra},
(SA for short). Axioms for the SA $A$ can be found in \cite
[Definition 1.1]{Fsa}. Here we propose to review some important
features of SAs. More details can be found in \cite{Fsa, FPproj,
FPtype, FPcom, FJP2proj, FJPpande, Pid}.

Corresponding to the SA $A$ is a (real or complex) associative linear
algebra $R\supseteq A$ with unit $1$, called the \emph{enveloping
algebra} of $A$, and $A$ is a real linear subspace of $R$ with $1\in A$.
For instance, for the synaptic algebra ${\mathcal{B}}\sp{sa}({\mathcal
{\mathfrak H}})$, the enveloping algebra is ${\mathcal{B}}({\mathcal
{\mathfrak H}})$.

Equipped with a partial order relation $\leq$, $A$ is a partially ordered
real linear space, $1$ is an order-unit in $A$, and $(A,1)$ is an order-unit
normed space. We shall assume that $1\not=0$, which enables us (as usual)
to identify each real number $\lambda\in\reals$ with $\lambda1\in A$. Thus
the order-unit norm on $A$ is given by $\|a\|:=\inf\{0<\lambda\in\reals:
-\lambda\leq a\leq\lambda\}$ for $a\in A$. Limits in $A$ are understood to
be limits with respect to the order-unit norm $\|\cdot\|$.

Let $a,b\in A$. The product $ab$ is calculated in $R$ and may or may not
belong to $A$, but if $a$ and $b$ commute, then $ab=ba\in A$. If $a,b\in
A\sp{+}$, and $ab=ba$, then $ab\in A\sp{+}$. Also $a\sp{2}\in A\sp{+}$, which
makes $A$ a special Jordan algebra \cite{McC} with the Jordan product
\[
a\odot b:=\frac12(ab+ba)=\frac14[(a+b)\sp{2}-(a-b)\sp{2}]\in A.
\]
We note that $a\sp{2}=a\odot a$, $a\sp{3}=a\odot a\sp{2}$, etc., so $A$ is
closed under $a\mapsto a\sp{n}$ for all $n\in\Nat$, and therefore $A$ is
closed under the formation of real polynomials in its elements.

For $a,b\in A$, we have $aba=2a\odot(a\odot b)-a\sp{2}\odot b\in A$, and the
linear mapping $b\mapsto aba$, called the \emph{quadratic mapping} determined
by $a$, turns out to be order preserving on $A$. Moreover, if $0\leq b$, then
$aba=0\Rightarrow  ab=ba=0$; in particular, putting $b=1$, we find that $a\sp{2}
=0\Rightarrow a=0$.

For $a,b\in A$, we write $aCb$ iff $ab=ba$, and for $B\subseteq A$, we define
the \emph{commutant} and the \emph{double commutant} in the usual way:
$C(B):=\{a\in A: aCb,\ \forall b\in B\}$ and $CC(B):=C(C(B))$. We write $C(b):
=C(\{b\})$ and $CC(b):=CC(\{b\})$. If $a\in CC(b)$, we say that $a$ \emph
{double commutes} with $b$. A subset $B$ of $A$ is \emph{commutative} iff
$aCb$ for all $a,b\in B$, i.e., iff $B\subseteq C(B)$.

As is the case for any order-unit normed space, elements of the ``unit
interval" $E:=\{e\in A: 0\leq e\leq 1\}$ are called \emph{effects}, and
$E$ is organized into a convex effect algebra $(E;0,1,\sp{\perp},\oplus)$
as in Definition \ref{df:UnitInterval}. It can be shown that $E=\{e\in A:
e\sp{2}\leq e\}$.

An important sub-effect algebra of $E$ is the set $P:=\{p\in A: p\sp{2}
=p\}$ of idempotents in $A$, which are called \emph{projections}. In fact,
$P$ is the set of extreme points of $E$ \cite[Theorem 2.6]{Fsa} and under
the restriction to $P$ of $\leq$ on $A$, $P$ is an orthomodular lattice
(OML) with orthocomplementation $p\mapsto p\sp{\perp}=1-p$. It is
easily seen that, if $p,q\in P$, then $p\oplus q$ is defined in $E$
(i.e., $p+q\leq 1$) iff $pq=qp=0$, in which case $p\oplus q=p+q=p\vee q$.

If $p\in P$ and $e\in E$, then $e\leq p$ iff $e=pep=pe=ep$, and $p\leq e$
iff $p=pe=ep=pep$, and in either case $pCe$. It turns out that the
infimum $p\wedge q$ and the supremum $p\vee q$ of projections $p$ and
$q$ in the OML $P$ are also the infimum and supremum, respectively, of
$p$ and $q$ in the effect algebra $E$. Therefore, we can safely denote
by $e\vee f$ and $e\wedge f$ existing suprema and infima in $E$ of
effects $e$ and $f$. If $e\in E$, $p\in P$, and $eCp$, then $e\wedge p$
exists in $E$ and $e\wedge p=pe=ep$. The question of exactly when two
effects $e$ and $f$ have an infimum or a supremum in $E$ is important
and open (cf. \cite{GGJ}).

Every $a\in A\sp{+}$, has a \emph{square root} $a\sp{1/2}\in A\sp{+}$
which is uniquely determined by the condition $(a\sp{1/2})\sp{2}=a$;
moreover, $a\sp{1/2}\in CC(a)$ \cite[Theorem 2.2]{Fsa}.

Let $a\in A$. Then $a\sp{2}\in A\sp{+}$, and the \emph{absolute value}
of $a$, defined by $|a|:=(a\sp{2})\sp{1/2}$, has the property that $|a|
\in CC(a)$. Moreover, $|a|$ is uniquely determined by the properties
$|a|\in A\sp{+}$ and $|a|\sp{2}=a\sp{2}$. Furthermore, using the
absolute value of $a$, we define the \emph{positive part} and the \emph
{negative part} of $a$ by $a\sp{+}:=\frac12(|a|+a)\in A\sp{+}\cap CC(a)$
and $a\sp{-}:=\frac12(|a|-a)\in A\sp{+}\cap CC(a)$, respectively. Then
$a=a\sp{+}-a\sp{-}$, $|a|=a\sp{+}+a\sp{-}$, and $a\sp{+}a\sp{-}=a\sp{-}
a\sp{+}=0$.

If $a\in A$, there exists a unique projection, denoted by $a\dg\in P$
and called the \emph {carrier} of $a$, such that, for all $b\in A$,
$ab=0\Leftrightarrow a\dg b=0$. It turns out that $ab=0\Leftrightarrow
a\dg b\dg=0\Leftrightarrow b\dg a\dg=0\Leftrightarrow ba=0$. By \cite
[Theorem 2.10]{Fsa}, $a\dg\in CC(a)$, $|a|\dg=a\dg$, and $(a^n)\dg=
a\dg$ for all $n\in\Nat$. For the SA ${\mathcal B}\sp{sa}({\mathfrak H})$,
the carrier of a self-adjoint operator $T$ is the (orthogonal) projection
onto the closure of the range of $T$.

An element $a\in A$ is \emph{invertible} iff there is a (necessarily
unique) element $a\sp{-1}\in A$ such that $aa\sp{-1}=a\sp{-1}a=1$. By
\cite[Theorem 7.2]{Fsa}, $a$ is invertible iff there exists $0\leq
\epsilon\in\reals$ such that $\epsilon\leq|a|$.

A linear subspace $V$ of $A$ is called a \emph{sub-synaptic algebra} of
$A$ iff, for all $a\in V$, (1) $1\in V$, (2) $a\sp{2}\in V$, (3) $0\leq a
\Rightarrow a\sp{1/2}\in V$, (4) $a\dg\in V$, and (5) $1\leq a\Rightarrow
a\sp{-1}\in V$. Under the partial order and the operations inherited from
$A$, a sub-synaptic algebra of $A$ is a synaptic algebra in its own right
\cite[Theorem 2.7]{FPproj}.

If $B\subseteq A$, then the commutant $C(B)$ is a norm closed sub-synaptic
algebra of $A$, and if $B$ is commutative, then $CC(B)$ is a norm closed
commutative sub-synaptic algebra of $A$ with $B\subseteq CC(B)$. In particular,
for all $a\in A$, $CC(a)$ is a norm closed commutative sub-synaptic algebra of
$A$ and $a\in CC(a)$, whence $A$ is covered by commutative norm closed
sub-synaptic algebras. Also the commutant $C(A)$ of $A$ itself, which is
called the \emph{center} of $A$, is a norm closed commutative sub-synaptic
algebra of $A$.

By \cite[Section 8]{Fsa}, each element $a\in A$ both determines and
is determined by a corresponding \emph{spectral resolution}, namely
the ascending family $(p\sb{a,\lambda})\sb{\lambda\in\reals}$ of
projections in $P\cap CC(a)$ given by $p\sb{a,\lambda}:=1-((a-\lambda)
\sp{+})\dg$ for all $\lambda \in {\mathbb R}$. It turns out that
$a=\int\sb{L\sb{a}-0}\sp{U\sb{a}}\lambda dp\sb{a,\lambda}$, where the
Riemann-Stiltjes type integral converges in norm, and where $L\sb{a}:=\sup
\{\lambda\in\reals:p\sb{a,\lambda}=0\}\in\reals$ and $U\sb{a}:=\inf
\{\lambda\in\reals: p\sb{a,\lambda}=1\}\in\reals$ are the so-called
\emph{spectral bounds} of $a$. Furthermore, by \cite[Theorem 8.10]{Fsa},
if $b\in A$, then $bCa$ iff $bCp\sb{a,\lambda}$ for all $\lambda\in\reals$;
hence $C(A)=C(P)$, and two elements in $A$ commute iff the projections in
their respective spectral resolutions commute.

The \emph{resolvent set} of $a\in A$ is the set of all $\mu\in\reals$
such that there is an open interval $I$ in $\reals$ with $\mu\in I$ and
$p\sb{a,\lambda}=p\sb{a,\mu}$ for all $\lambda\in I$. The \emph{spectrum}
of $a\in A$, denoted by $\spec(a)$, is defined as the complement in
$\reals$ of the resolvent set of $a$. By \cite[Lemma 3.1]{FPproj}, $\spec(a)
=\{\lambda\in\reals: a-\lambda\ \mbox{is not invertible}\}$. If $a\in A$,
then $\spec(a)$ is a nonempty closed and bounded subset of $\reals$, $\|a\|
=\sup\{|\lambda|:\lambda\in\spec(a)\}$, $U\sb{a}=\sup\, \spec(a)$, and
$L\sb{a}=\inf\, \spec(a)$. Moreover, $a\in A\sp{+}\Leftrightarrow\spec(a)
\subseteq\reals\sp{+}$, $a\in P\Leftrightarrow\spec(a)\in\{0,1\}$, and
$a\in E\Leftrightarrow\spec(a)\subseteq [0,1]$.

An element $a\in A$ is \emph{simple} iff it can be written as a finite
linear combination of pairwise commuting projections. By \cite[Theorem 8.9]
{Fsa}, $a$ is simple iff $\spec(a)$ is finite, and if $a$ is simple, it
can be written uniquely in \emph{spectral form} as $a=\sum\sb{i=1}\sp{n}
\lambda\sb{i}p\sb{i}$ with $\lambda\sb{1}<\lambda\sb{2}<\cdots<\lambda
\sb{n}$, $0\neq p\sb{i}\in P$, and $\sum\sb{i=1}\sp{n}p\sb{i}=1$. Then
$p\sb{i}p\sb{j}=0$ for $1\leq i\not=j\leq n$ and $\spec(a)=\{\lambda
\sb{1},\lambda\sb{2},\ldots, \lambda\sb{n}\}$. In fact, $\lambda\sb{i}$
are the \emph{eigenvalues} of $a$ and $p\sb{i}$ are the corresponding
pairwise orthogonal \emph{eigenprojections} of $a$ for $i=1,2,...,n$,
\cite[Definition 8.2 (iii)]{Fsa}. If $a$ is simple with spectral form
$a=\sum\sb{i=1}\sp{n}\lambda\sb{i}p\sb{i}$ and $f$ is a real polynomial
function, it is not difficult to see that $f(a)=\sum\sb{i=1}\sp{n}
f(\lambda\sb{i})p\sb{i}$.

We shall denote by $A\sb{0}$ the set of all simple elements in $A$. By
\cite[Corollary 8.6]{Fsa}, each element $a\in A$ is the norm limit and
also the supremum of an ascending sequence of pairwise commuting elements
in $A\sb{0}$, each of which is a finite linear combination of projections
belonging to the spectral resolution of $a$. In particular, $A\sb{0}$ is
norm dense in $A$.

The following theorem follows \emph{mutatis mutandis} from the proof of
\cite[Proposition 3.9]{Hand}.

\begin{theorem} \label{th:NormComplete}
If $A$ is a monotone $\sigma$-complete, then $A$ is norm complete, i.e.,
$A$ is a Banach synaptic algebra.
\end{theorem}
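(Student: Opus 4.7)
The plan is to prove norm completeness by the standard order-theoretic Cauchy argument adapted to the order-unit structure of $A$. Let $(a_n)\sb{n=1}\sp{\infty}\subseteq A$ be a norm-Cauchy sequence. First I would extract a rapidly convergent subsequence $(a\sb{n\sb{k}})\sb{k=1}\sp{\infty}$ with $\|a\sb{n\sb{k+1}}-a\sb{n\sb{k}}\|<2\sp{-k}$. Using the standard order-unit norm property $-\|x\|\leq x\leq\|x\|$ recorded in Section~\ref{sc:OUNS}, this gives the order bounds $-2\sp{-k}\leq a\sb{n\sb{k+1}}-a\sb{n\sb{k}}\leq 2\sp{-k}$.

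Next, to exploit monotone $\sigma$-completeness I would define
\[
b\sb{k}:=a\sb{n\sb{k}}-2\sp{-k+1}\text{\ \ and\ \ }c\sb{k}:=a\sb{n\sb{k}}+2\sp{-k+1}.
\]
A direct check using the order bound above shows $b\sb{k+1}-b\sb{k}=(a\sb{n\sb{k+1}}-a\sb{n\sb{k}})+2\sp{-k}\geq 0$ and symmetrically $c\sb{k+1}-c\sb{k}\leq 0$, so $(b\sb{k})$ is ascending and $(c\sb{k})$ is descending. Telescoping the bounds yields $b\sb{k}\leq a\sb{n\sb{j}}\leq c\sb{k}$ for all $j\geq k$, so $(b\sb{k})$ is bounded above (say by $c\sb{1}$) and $(c\sb{k})$ is bounded below. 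Since $A$ is monotone $\sigma$-complete and the involution $a\mapsto -a$ makes ascending/descending completeness equivalent by Remarks~\ref{rm:involution}, there exist $b,c\in A$ with $b\sb{k}\nearrow b$ and $c\sb{k}\searrow c$.

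Now I would sandwich: since $b\sb{j}\leq c\sb{k}$ for all $j,k$, passing to the supremum in $j$ and then the infimum in $k$ gives $b\leq c$, while $c\sb{k}-b\sb{k}=2\sp{-k+2}$ forces $0\leq c-b\leq 2\sp{-k+2}$ for every $k$, hence $b=c$. Then $b\sb{k}\leq b\leq c\sb{k}$ together with $b\sb{k}\leq a\sb{n\sb{k}}\leq c\sb{k}$ give $-2\sp{-k+1}\cdot 1\leq b-a\sb{n\sb{k}}\leq 2\sp{-k+1}\cdot 1$, so by the second displayed property of the order-unit norm in Section~\ref{sc:OUNS} we conclude $\|b-a\sb{n\sb{k}}\|\leq 2\sp{-k+1}$. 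Thus $a\sb{n\sb{k}}\to b$ in norm, and since $(a\sb{n})$ is Cauchy, $a\sb{n}\to b$ as well.

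The only real obstacle is the passage from order convergence to norm convergence; that step is exactly where the order-unit norm axiom $-b\leq a\leq b\Rightarrow\|a\|\leq\|b\|$ is indispensable, and it is what makes the whole argument go through verbatim from the general ordered-space template of \cite[Proposition~3.9]{Hand}. Everything else is routine algebraic manipulation with the ascending/descending auxiliary sequences and an application of Definition~\ref{df:CompletenessProps}(5) together with the involution symmetry of Remarks~\ref{rm:involution}.
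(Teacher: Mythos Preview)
Your argument is correct and is exactly the Handelman-style proof the paper invokes: the paper itself gives no details for Theorem~\ref{th:NormComplete} beyond the remark that it follows \emph{mutatis mutandis} from \cite[Proposition~3.9]{Hand}, and what you have written is a faithful unpacking of that proposition in the order-unit normed setting of $A$. The only place to be slightly more explicit is the step ``$0\leq c-b\leq 2^{-k+2}$ for every $k$ hence $b=c$,'' which uses that $(A,1)$ is Archimedean; this is part of the standing order-unit normed space hypothesis, so the inference is justified.
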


\begin{definition} \label{df:Commutative VP}
$A$ has the \emph{commutative Vigier property} iff, for every bound\-ed
ascending sequence $(a\sb{n})\sb{n=1}\sp{\infty}$ of pairwise commuting
elements in $A$, there exists $a\in CC(\{a\sb{n}:n\in\Nat\})$ with
$a\sb{n}\nearrow a$.
\end{definition}

A so-called \emph{generalized Hermitian} (GH-) \emph{algebra},
which was defined and studied in \cite[Definition 2.1]{FPgh} and
\cite{FPreg}, turns out to be the same thing as a synaptic algebra
that satisfies the commutative Vigier property \cite[\S 6]{Fsa}.
By \cite[Lemma 5.4]{FPgh}, if $A$ is a GH-algebra, then $P$ is a
$\sigma$-complete OML. Clearly, if $A$ is commutative, then $A$ is
a GH-algebra iff $A$ is monotone $\sigma$-complete. Thus, by
Theorem \ref{th:NormComplete}, a commutative GH-algebra is norm complete.

Every synaptic algebra of finite rank (meaning that there exists
$n\in\Nat$ such that there are $n$, but not $n+1$, mutually orthogonal
nonzero projections in $P$) is a GH-algebra. According to \cite{FPSpin},
a synaptic algebra of rank 2 is the same thing as a spin factor of
dimension greater than $1$. Thus, synaptic algebras of finite rank
need not be finite dimensional.

Condition (4) in the following definition is an enhancement of
\cite[Definition 2.9]{FPproj} but this does not affect the definition
of a synaptic isomorphism.

\begin{definition} \label{df:morphisms}
Let $A\sb{1}$ and $A\sb{2}$ be
synaptic algebras. A linear mapping $\phi\colon A\sb{1}\to A\sb{2}$
is a \emph{synaptic morphism} iff, for all $a,b\in A\sb{1}$:
\[
\begin{array}{ll}
 {\rm(1)}\  \phi(1)=1. &\ \ \ \  {\rm(2)}\ \phi(a\sp{2})=\phi(a)\sp{2}. \\
 {\rm(3)}\ aCb\Rightarrow\phi(a)C\phi(b). &\ \ \ \ {\rm(4)}\ \phi(a\dg)=
 \phi(a)\dg.
\end{array}
\]

\noindent A synaptic morphism $\phi$ is a \emph{synaptic isomorphism}
iff it is a bijection and $\phi\sp{-1}$ is also a synaptic morphism.
A synaptic morphism $\phi$ is a \emph{GH-algebra morphism} iff, for
every ascending sequence of mutually commuting elements $(a\sb{n})
\sb{n=1}\sp{\infty}$ in $A$,
\[
{\rm(5)}\ a\sb{n}\nearrow a\in A\Rightarrow \phi(a\sb{n})\nearrow
 \phi(a).
\]
\end{definition}

\begin{remark} \label{rm:GHisomorphism}
Since a synaptic isomorphism is necessarily an order isomorphism,
if $A\sb{1}$ is a GH-algebra, $A\sb{2}$ is an SA, and there is a
synaptic isomorphism $\phi$ from $A\sb{1}$ onto $A\sb{2}$, then
$A\sb{2}$ is a GH-algebra and $\phi$ is a GH-isomorphism.
\end{remark}

A \emph{state} on the synaptic algebra $A$ is defined just as it
is for any order-unit normed space, namely as a linear functional
$\rho\colon A\to\reals$ that is positive ($a\in A\sp{+}\Rightarrow
\rho(a)\in\reals\sp{+}$) and normalized ($\rho(1)=1$) (Definition
\ref{df:StateonV}). The state space of $A$ and the set of extremal
states on $A$ are denoted by $S(A)$ and $\Ext(S(A))$. Likewise,
$S(E)$, $\Ext(S(E))$, $S(P)$, and $\Ext(S(P))$ denote the states
and extremal states on the convex effect algebra $E\subseteq A$
and on the OML $P\subseteq E$. Thus, Theorems \ref{th:FnlProps},
\ref{th:stateislinfnl}, \ref{th:rho&omega}, and Corollary \ref
{co:extendstate} hold for states on $A$ and $E$. In particular,
$S(A)$ determines the partial order and the order-unit norm on $A$.
Moreover, there is an affine bijection $\rho\leftrightarrow\omega$
between states $\rho\in S(A)$ and states $\omega\in S(E)$ given by
restriction and extension, and $\rho$ is extremal iff $\omega$ is
extremal.

\section{Commutative synaptic algebras}  \label{sc:ComSA}

Clearly, the synaptic algebra $A$ is \emph{commutative} iff $A=C(A)$,
i.e., iff $A$ equals its own center. A commutative synaptic algebra is
a commutative associative partially ordered Archimedean real linear
algebra with a unity element $1$ that is also an order unit; it is a
normed linear algebra under the order-unit norm; and it can be regarded
as its own enveloping algebra. If $A$ is commutative, then the set
$A\sb{0}$ of simple elements in $A$ is a norm dense commutative
sub-synaptic algebra of $A$.

\begin{remark} \label{rm:ComIsom}
Using the axioms for an SA \cite[Definition 1.1]{Fsa}, it is not
difficult to see that if $A\sb{1}$ is a commutative SA and $A\sb{2}$
is a real associative algebra , then $\phi\colon A\sb{1}\to A\sb{2}$
is an algebra isomorphism of $A\sb{1}$ onto $A\sb{2}$ iff $A\sb{2}$
is a commutative SA and $\phi$ is a synaptic isomorphism of $A\sb{1}$
onto $A\sb{2}$.
\end{remark}

The following lemma is a consequence of \cite[Theorem 5.12]{vectlat} and
\cite[\S 6]{Fsa}.

\begin{lemma} \label{lm:ComLat}
The following conditions are mutually equivalent{\rm:}\newline
{\rm(i)} $A$ is commutative. {\rm(ii)} $A$ is lattice ordered, hence
 a vector lattice. {\rm(iii)} $E$ is an MV-effect algebra. {\rm(iv)}
$P$ is a Boolean algebra.

\noindent Moreover, every Boolean algebra can be realized as the Boolean
algebra of projections in a commutative synaptic algebra.
\end{lemma}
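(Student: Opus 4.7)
My plan is to prove the four conditions equivalent by establishing the cycle (i)$\Rightarrow$(ii)$\Rightarrow$(iii)$\Rightarrow$(iv)$\Rightarrow$(i), and then address the ``moreover'' claim separately using Stone duality.

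For (i)$\Rightarrow$(ii), I would use the functional-calculus machinery already in Section \ref{sc:SA}: since $A$ is commutative, for any $a,b\in A$ the difference $a-b$ has an absolute value $|a-b|\in CC(a-b)\subseteq A$, and I would verify that $a\vee b:=\tfrac12(a+b+|a-b|)$ and $a\wedge b:=\tfrac12(a+b-|a-b|)$ are the join and meet in the partial order of $A$, using the decomposition $(a-b)=(a-b)^{+}-(a-b)^{-}$ and the fact that $(a-b)^{\pm}\in A\sp{+}$ with product zero. This makes $A$ a vector lattice. For (ii)$\Rightarrow$(iii), I would restrict the lattice operations to $E$ and check the MV-effect axioms: the only nontrivial point is that disjoint pairs are orthogonal, which follows from the $\ell$-group identity $e+f=(e\vee f)+(e\wedge f)$: if $e,f\in E$ with $e\wedge f=0$, then $e+f=e\vee f\leq 1$, hence $e\perp f$ in the effect algebra sense.

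For (iii)$\Rightarrow$(iv), I would use that an MV-effect algebra is a distributive lattice. Since $P\subseteq E$ is a sub-OML of $E$ (as recalled in Section \ref{sc:SA}, meets and joins of projections in the OML sense coincide with those in $E$), distributivity of $E$ restricts to distributivity of the OML $P$; a distributive OML is a Boolean algebra, as noted after Definition of OML. For (iv)$\Rightarrow$(i), I would invoke the crucial identity $C(A)=C(P)$ from the spectral-theoretic discussion: if $P$ is Boolean, then all projections commute pairwise, so $P\subseteq C(P)=C(A)$, meaning every projection commutes with every element of $A$. Given $a,b\in A$, approximate $a$ in norm by simple elements $a_n\in A_0$ (each a finite linear combination of projections); each $a_n$ commutes with $b$ by the previous sentence and linearity, and continuity of multiplication in the order-unit norm gives $ab=ba$.

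For the ``moreover'' statement, I would use Stone's representation theorem: given a Boolean algebra $B$, let $X$ be its Stone space (a compact Hausdorff, totally disconnected space whose clopen algebra is isomorphic to $B$). Then $(C(X,\reals),1)$ is an order-unit normed space and, as mentioned in Section \ref{sc:OUNS}, a commutative Banach algebra under pointwise operations; I would verify that it satisfies the SA axioms (the closure properties under $a\mapsto a^{2}$, $a\mapsto a^{1/2}$ for $a\geq 0$, $a\mapsto a\dg$, and $a\mapsto a^{-1}$ when $1\leq a$ all reduce to pointwise operations on continuous functions). The idempotents of $C(X,\reals)$ are exactly the characteristic functions $\chi_{U}$ of clopen subsets $U\subseteq X$, and this gives a Boolean algebra isomorphism between $P$ and the clopen algebra of $X$, which is $B$.

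The step I expect to require the most care is (iv)$\Rightarrow$(i): one has to be precise about why the density of $A_{0}$ in the order-unit norm, together with pairwise commutativity of projections, forces global commutativity, since multiplication in the enveloping algebra $R$ must be shown to be jointly norm-continuous on bounded subsets of $A$ — this is where the inequality $\|ab\|\leq\|a\|\|b\|$ for commuting self-adjoint elements (or the standard $\|a\|=\sup|\spec(a)|$ bound together with the quadratic mapping) enters. Everything else is a direct translation of facts already assembled in the excerpt.
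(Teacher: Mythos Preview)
Your cycle (i)$\Rightarrow$(ii)$\Rightarrow$(iii)$\Rightarrow$(iv)$\Rightarrow$(i) is correct, and since the paper simply cites \cite[Theorem 5.12]{vectlat} and \cite[\S 6]{Fsa} rather than spelling out an argument, your explicit proof is a useful complement. One simplification: the worry about joint norm-continuity of multiplication in (iv)$\Rightarrow$(i) is unnecessary. Once you know $P\subseteq C(A)$ you get $A\sb{0}\subseteq C(A)$ by linearity, and then $A\subseteq C(A)$ simply because commutants are norm-closed (this is stated explicitly in Section~\ref{sc:SA}). Even more directly, the spectral criterion $bCa\Leftrightarrow bCp\sb{a,\lambda}$ for all $\lambda$ (also recorded in Section~\ref{sc:SA}) yields commutativity immediately once every projection is central, with no limiting argument at all.

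There is, however, a genuine gap in your ``moreover'' argument. You propose to realize an arbitrary Boolean algebra $B$ as the projection lattice of $C(X,\reals)$, where $X$ is the Stone space of $B$; but $C(X,\reals)$ is \emph{not} a synaptic algebra in general. The carrier $f\dg$ of $f\in C(X,\reals)$ would have to be the characteristic function of the closure of $\{x:f(x)\neq 0\}$, and that closure need not be open. This is precisely the content of Theorem~\ref{th:C(X,reals)Props}: $C(X,\reals)$ is an SA iff $X$ is basically disconnected, equivalently iff $B$ is $\sigma$-complete. The construction that works for every $B$ is the subalgebra $C(X,\reals)\sb{0}$ of finite real linear combinations of characteristic functions $\chi\sb{K}$, $K\in{\mathcal F}(X)$; as the paper notes just before Theorem~\ref{th:C(X,reals)Props}, this \emph{is} a commutative synaptic algebra (every element is simple, so carriers are produced by hand), and its projection lattice is $P(X,\reals)\cong{\mathcal F}(X)\cong B$.
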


Functional representations for a commutative synaptic algebra and for
a commutative GH-algebra involve the Archimedean partially ordered
real commutative associative Banach algebra $C(X,\reals)$ corresponding
to a compact Hausdorff space $X$.  As was observed in \cite[p. 244]{FPproj},
$C(X,\reals)$ satisfies all of the synaptic algebra axioms, with the possible
exception of the existence of carriers (Axiom SA6 in \cite{Fsa}).

Let $X$ be a compact Hausdorff space. The ``unit interval" $E(X,\reals):=
\{e\in C(X,\reals):0\leq e\leq 1\}$ is organized into a convex effect
algebra just as it is for any order-unit normed space (Definition \ref
{df:UnitInterval}). We define ${\mathcal F}(X)$ to be the set of all compact
open (clopen) subsets of $X$. Then ${\mathcal F}(X)$ is a field of subsets of
$X$, whence, partially ordered by set containment $\subseteq$, it is a Boolean
algebra. We also define $P(X,\reals):=\{p\in C(X,\reals):p=p\sp{2}\}$ and note
that $P(X,\reals)$ is a sub-effect algebra of $E(X,\reals)$ consisting of all
characteristic set functions (indicator functions) $\chi\sb{K}$ of compact
open sets $K\in{\mathcal F}(X)$. Thus, $P(X,\reals)$ is a Boolean algebra
isomorphic to ${\mathcal F}(X)$ under the mapping $\chi\sb{K}\mapsto K$.
Moreover, we define $C(X,\reals)\sb{0}$ to be the set of all finite linear
combinations of elements in $P(X,\reals)$. Then $C(X,\reals)\sb{0}$ is a
subalgebra of $C(X,\reals)$, $1\in C(X,\reals)\sb{0}$, and it is not difficult
to confirm that $C(X,\reals)\sb{0}$ is a commutative synaptic algebra. What
about $C(X,\reals)$? The next theorem answers this question.

\begin{theorem} \label{th:C(X,reals)Props}
Let $X$ be a compact Hausdorff space. Then the following conditions are
mutually equivalent{\rm:}
\begin{enumerate}
\item $C(X,\reals)$ is a GH-algebra.
\item $C(X,\reals)$ is a synaptic algebra.
\item $C(X,\reals)$ has the Rickart property, i.e., if $f\in C(X,
 \reals)$, then there exists $p\in P(X,\reals)$ such that for all
 $g\in C(X,\reals)$, $fg=0\Leftrightarrow g=pg$.
\item $X$ is basically disconnected, i.e., the closure of any open
 F$\sb{\sigma}$ subset of $X$ remains open.
\item $P(X,\reals)$ is a $\sigma$-complete Boolean algebra.
\item $C(X,\reals)$ is Dedekind $\sigma$-complete.
\item $C(X,\reals)$ is monotone $\sigma$-complete.
\end{enumerate}
\end{theorem}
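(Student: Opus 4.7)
The plan is to combine the identity (i) $\Leftrightarrow$ (ii) $\wedge$ (vii) with the chain (ii) $\Leftrightarrow$ (iii) $\Leftrightarrow$ (iv) $\Leftrightarrow$ (v) $\Leftrightarrow$ (vi) $\Leftrightarrow$ (vii), anchored by the geometric condition (iv). The starting observation, already recorded in Section \ref{sc:ComSA}, is that $C(X,\reals)$ satisfies every synaptic-algebra axiom except possibly the existence of carriers; so (ii) amounts to saying that each $f \in C(X,\reals)$ admits a carrier $f\dg \in P(X,\reals)$. Setting $p := 1 - f\dg$ converts the defining carrier identity $fg = 0 \Leftrightarrow f\dg g = 0$ into the Rickart identity $fg = 0 \Leftrightarrow g = pg$, and conversely, given $p$ as in (iii) the projection $1 - p$ is the carrier of $f$; this establishes (ii) $\Leftrightarrow$ (iii). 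For the link with (i) I would invoke the observation from Section \ref{sc:SA} that, for a commutative SA, being a GH-algebra is equivalent to being monotone $\sigma$-complete, giving (i) $\Leftrightarrow$ (ii) $\wedge$ (vii).

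To pass from (iii) to (iv) I would compute the carrier explicitly in the commutative setting. For $f \in C(X,\reals)$ the set $U_f := \{x : f(x) \neq 0\} = \bigcup_n \{|f| \geq 1/n\}$ is an open $F_\sigma$, and a direct check shows that $fg = 0$ iff $g$ vanishes on $\overline{U_f}$; consequently the only candidate for $f\dg$ is $\chi_{\overline{U_f}}$, which forces $\overline{U_f}$ to be clopen. Conversely, by Urysohn every open $F_\sigma$ set $U = \bigcup_n F_n$ is $U_f$ for some $f \in C(X,\reals)^+$: choose continuous $g_n \geq 0$ strictly positive on $F_n$ and vanishing off $U$, and take $f := \sum_n 2^{-n} g_n/(1 + \|g_n\|)$. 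Hence (iii) $\Leftrightarrow$ (iv).

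The equivalences (iv) $\Leftrightarrow$ (v) $\Leftrightarrow$ (vi) are the classical Nakano-type characterizations of basic disconnectedness. Given (iv), a countable family of clopens $K_n$ has supremum $\chi_{\overline{\bigcup_n K_n}}$ in $P(X,\reals)$, the closure being clopen by hypothesis; conversely, $\sigma$-completeness of $P(X,\reals)$ recovers, via the characteristic functions attached to approximating clopens of a given open $F_\sigma$, the required clopen closure. The equivalence (vi) $\Leftrightarrow$ (iv) is Nakano's theorem, proved by upper-semicontinuous regularization of the lower-semicontinuous pointwise supremum of a bounded ascending sequence in $C(X,\reals)$. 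Finally, (vi) $\Leftrightarrow$ (vii) follows from the pointwise lattice structure of $C(X,\reals)$: the partial suprema $s_n := a_1 \vee \cdots \vee a_n$ of a bounded sequence form a bounded ascending sequence whose monotone supremum is the Dedekind supremum of $(a_n)$, while the dual infimum statement is delivered by Remarks \ref{rm:involution} applied to the involution $a \mapsto -a$. The main obstacle will be the Nakano step (vi) $\Leftrightarrow$ (iv), where topology enters essentially through the regularization of lower-semicontinuous pointwise suprema of continuous functions; all other steps reduce, via the carrier--Rickart--clopen dictionary assembled above, to routine verifications.
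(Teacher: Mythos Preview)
Your overall architecture matches the paper's: the equivalence (ii) $\Leftrightarrow$ (iii) via $p=1-f\dg$ is identical, and your (iii) $\Leftrightarrow$ (iv) argument spells out, with an explicit Urysohn construction, what the paper offloads to \cite[Theorem~C, p.~217]{Hal}. The chief difference is packaging: the paper bundles (i), (iv), (vi), (vii) into a single citation of \cite[Theorem~5.7]{FPreg}, whereas you separate these into Nakano's theorem for (iv) $\Leftrightarrow$ (vi), a lattice reduction for (vi) $\Leftrightarrow$ (vii), and the commutative-GH characterization for (i) $\Leftrightarrow$ (ii)${}\wedge{}$(vii). Your route is more self-contained; the paper's is shorter by outsourcing.

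There is one genuine gap. Your sketch of (v) $\Rightarrow$ (iv) says that $\sigma$-completeness of $P(X,\reals)$ lets one recover the clopen closure of an open $F\sb{\sigma}$ ``via the characteristic functions attached to approximating clopens.'' But nothing so far guarantees that $X$ has any nontrivial clopens: for $X=[0,1]$ one has $P(X,\reals)=\{0,1\}$, trivially a $\sigma$-complete Boolean algebra, yet $[0,1]$ is not basically disconnected (the closure of $(0,\tfrac12)$ is not open). So your approximation step has no clopens to work with, and (v) $\Rightarrow$ (iv) as you have sketched it does not go through. The paper does not attempt this argument either; it simply asserts (iv) $\Leftrightarrow$ (v) as ``well-known.'' In either treatment this direction deserves more care, or the logical chain should be rerouted so that (v) is reached only \emph{from} (iv) and the remaining conditions are linked to (iv) independently---which is in fact how both your chain and the paper's already connect everything else.
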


\begin{proof} Leaving (i) aside for a moment, we begin by proving
that (ii) $\Leftrightarrow$ (iii) $\Leftrightarrow$ (iv). As we observed
above, (ii) holds iff each $f\in C(X,\reals)$ has a carrier $f\dg\in
P(X,\reals)$ satisfying the condition that, for all $g\in C(X,\reals)$,
$fg=0\Leftrightarrow f\dg g=0$. But with $p:=1-f\dg$, the latter
condition is equivalent to $fg=0\Leftrightarrow g=pg$, and we have
(ii) $\Leftrightarrow$ (iii) (Cf. \cite[Prop. 1]{Berb}).  Using Urysohn's
lemma, it is not difficult to see that (iii) holds iff, for all $f\in
C(X,\reals)$, the closure ${\bar S}$ of $S:=\{x\in X: f(x)\neq 0\}$, is
open in $X$, and by \cite[Theorem C, p. 217]{Hal}, the latter condition
is equivalent to (iv).

(iv) $\Leftrightarrow$ (v). It is well-known that $X$ is basically
disconnected iff the field ${\mathcal F}(X)$ of compact open subsets of
$X$ is a $\sigma$-complete Boolean algebra. Since the Boolean algebra
$P(X,\reals)$ is isomorphic to ${\mathcal F}(X)$, we have (iv)
$\Leftrightarrow$ (v).

As a consequence of \cite[Theorem 5.7]{FPreg}, (i), (iv), (vi),
and (vii) are mutually equivalent, whence (i)--(vii) are
mutually equivalent.
\end{proof}

A \emph{Stone space} is defined to be a compact Hausdorff space $X$ such
that the compact open sets $K\in{\mathcal F}(X)$ form a basis for the
open sets in $X$. If $X$ is a Stone space, then $C(X,\reals)\sb{0}$
separates points in $X$, whence by the Stone-Weierstrass theorem,
the commutative synaptic algebra $C(X,\reals)\sb{0}$ is norm dense
in $C(X,\reals)$.

\begin{remarks} \label{rm:Stone}
Let $B$ be a Boolean algebra, let ${\bf 2}:=\{0,1\}$ be the two-element
Boolean algebra, and let $X$ be the set of all Boolean homomorphisms $x
\colon B\to{\bf 2}$. With the discrete topology on ${\bf 2}$, the product
space ${\bf 2}\sp{B}$ is compact and Hausdorff, and $X\subseteq{\bf 2}
\sp{B}$ is given the corresponding subspace topology. Then $X$ is a
Stone space called  \emph{the Stone space} of the Boolean algebra $B$.
According to M.H. Stone's representation theorem for Boolean algebras,
for each $b\in B$, $K\sb{b}:=\{x\in X:x(b)=1\}\in{\mathcal F}(X)$ and
the mapping defined by $b\mapsto K\sb{b}$ is a Boolean isomorphism of $B$
onto ${\mathcal F}(X)$. Therefore, the mapping $\psi\colon B\to P(X,\reals)$
defined by $\psi(b):=\chi\sb{K\sb{b}}$ is a Boolean isomorphism of $B$ onto
$P(X,\reals)$.
\end{remarks}

We note that if any one, hence all of the conditions in Theorem
\ref{th:C(X,reals)Props} hold, then $X$ is a Stone space and each
$x\in X$ can be identified with a Boolean homomorphism $x\colon P(X,
\reals)\to{\bf 2}$ according to $x(\chi\sb{K}):=\chi\sb{K}(x)$ for
every $K\in{\mathcal F}(X)$. In this way, $X$ can be regarded as the
Stone space of the $\sigma$-complete Boolean algebra $P(X,\reals)$.

If $A$ is commutative, then, unless $A$ is a GH-algebra, by
Remark \ref{rm:ComIsom} and Theorem \ref{th:C(X,reals)Props},
there cannot be an algebra isomorphism from $A$ onto $C(X,
\reals)$; however, by \cite[Theorem 4.1]{FPproj}, we do have
a functional representation for $A$ as follows.

\begin{theorem} \label{th:F}
Suppose that $A$ is commutative and let $X$ be the Stone space of
the Boolean algebra $P$. Then there is a subalgebra $F$ of
$C(X,\reals)$ such that $F$ is a synaptic algebra, $C(X,\reals)
\sb{0}$ is a sub-synaptic algebra of $F$, $C(X,\reals)\sb{0}$
is norm dense in $C(X,\reals)$, and there is a synaptic isomorphism
$\Psi$ of $A$ onto $F$ such that the restriction $\psi$ of $\Psi$ to
$P$ is a Boolean isomorphism of $P$ onto $P(X,\reals)$ as per
Stone's representation theorem {\rm(Remarks \ref{rm:Stone})}.
\end{theorem}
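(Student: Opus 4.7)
The plan is to construct $\Psi$ in stages: first define it on the norm-dense commutative sub-synaptic algebra $A_0$ of simple elements, then extend by norm-continuity to all of $A$, and finally invoke Remark \ref{rm:ComIsom} to upgrade the resulting algebra isomorphism into a synaptic isomorphism. Since $A$ is commutative, Lemma \ref{lm:ComLat} guarantees that $P$ is a Boolean algebra, so Stone's representation theorem (Remarks \ref{rm:Stone}) supplies the Stone space $X$ together with the Boolean isomorphism $\psi\colon P\to P(X,\reals)$, $\psi(p)=\chi_{K_p}$.

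To define $\Psi$ on $A_0$ I would use the spectral form: each simple $a\in A_0$ has a unique representation $a=\sum_{i=1}^{n}\lambda_i p_i$ with the $\lambda_i$ distinct and the nonzero $p_i\in P$ pairwise orthogonal and summing to $1$, so setting $\Psi_0(a):=\sum_{i=1}^{n}\lambda_i\psi(p_i)$ is unambiguous. Linearity and multiplicativity of $\Psi_0$ follow by the standard refinement trick: given two simple elements, their supporting projections in the Boolean algebra $P$ admit a common refinement via the products $p_iq_j$, and the arithmetic of simple elements in $A$ is transported verbatim by $\psi$ to the arithmetic of the corresponding pairwise disjoint characteristic functions in $C(X,\reals)_0$. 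Because the $p_i$ are pairwise orthogonal with sum $1$ and the $K_{p_i}$ therefore partition $X$, a simple $a=\sum\lambda_i p_i$ with distinct eigenvalues satisfies $\|a\|=\max_i|\lambda_i|=\|\Psi_0(a)\|_\infty$, using that $\|a\|=\sup|\spec(a)|$; hence $\Psi_0$ is an isometric algebra isomorphism of $A_0$ onto $C(X,\reals)_0$.

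Next I would extend $\Psi_0$ to a map $\Psi\colon A\to C(X,\reals)$ by norm-continuity. Every $a\in A$ is a norm limit of a sequence $(a_n)\subseteq A_0$ (as recalled in the excerpt following the discussion of spectral resolutions); since $\Psi_0$ is an isometry and $C(X,\reals)$ is complete under the uniform norm, the limit $\Psi(a):=\lim_n\Psi_0(a_n)$ exists and is independent of the approximating sequence, and $\Psi$ itself is an isometry. Linearity and multiplicativity pass to the limit because the operations in the Banach algebra $C(X,\reals)$ are jointly norm-continuous. Setting $F:=\Psi(A)$ yields a subalgebra of $C(X,\reals)$ containing $\Psi(A_0)=C(X,\reals)_0$, and $\Psi\colon A\to F$ is an algebra isomorphism whose restriction to $P$ coincides with the Stone isomorphism $\psi$.

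To finish, I would apply Remark \ref{rm:ComIsom} to the algebra isomorphism $\Psi\colon A\to F$: this promotes $F$ to a commutative synaptic algebra and $\Psi$ to a synaptic isomorphism in one stroke. The same remark, applied to the restriction $\Psi_0\colon A_0\to C(X,\reals)_0$, shows that $C(X,\reals)_0$ is itself a commutative synaptic algebra and hence, being stable under all the SA operations inherited from $F$, a sub-synaptic algebra of $F$. Norm-density of $C(X,\reals)_0$ in $C(X,\reals)$ is the Stone--Weierstrass consequence already recorded just before the theorem. I expect the main technical obstacle to be the verification that $\Psi_0$ is a well-defined, multiplicative isometry on $A_0$: this is the step where one must reconcile different spectral decompositions of the same simple element and verify that the product of two simple elements in $A$ is transported by $\psi$ to the pointwise product of the corresponding step functions on $X$, both of which rest on the refinement argument in the Boolean algebra $P$.
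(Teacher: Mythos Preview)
The paper does not supply its own proof of this theorem; it simply records the result and cites \cite[Theorem 4.1]{FPproj} for the argument. Your proposal is a correct and natural direct proof: build $\Psi$ on the simple elements $A_0$ via the Stone isomorphism $\psi$ on $P$, verify that $\Psi_0\colon A_0\to C(X,\reals)_0$ is an isometric unital algebra isomorphism (the refinement argument in the Boolean algebra $P$ is exactly what is needed here, as you note), extend by norm-continuity using completeness of $C(X,\reals)$, and then invoke Remark \ref{rm:ComIsom} to upgrade the algebra isomorphism $\Psi\colon A\to F:=\Psi(A)$ to a synaptic isomorphism. All of the ingredients you use---norm-density of $A_0$ in $A$, the fact that $A_0$ is a sub-synaptic algebra of $A$ when $A$ is commutative, and the Stone--Weierstrass density of $C(X,\reals)_0$ in $C(X,\reals)$ for a Stone space $X$---are already established in the paper before the theorem is stated, so nothing is missing. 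One small point worth making explicit in a polished write-up is that the partial order transported to $F$ via $\Psi$ coincides with the pointwise order inherited from $C(X,\reals)$; this follows because $\Psi$ is a unital isometry and positivity in an order-unit normed space is characterized by the norm and the unit (e.g., $a\geq 0$ iff $\|\,\|a\|\cdot 1 - a\,\|\leq\|a\|$), so the synaptic structure on $F$ is the natural one.
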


For a commutative GH-algebra, we have the following functional
representation theorem (see \cite[Theorem 5.9]{FPreg} for an
alternative proof).

\begin{theorem} \label{th:FnRepGH}
Suppose that $A$ is a commutative GH-algebra and let $X$ be the
basically disconnected Stone space of the $\sigma$-complete Boolean
algebra $P$. Then there is a synaptic isomorphism $\Psi\colon A\to
C(X,\reals)$ of $A$ onto $C(X,\reals)$ such that the restriction
$\psi$ of $\Psi$ to $P$ is a Boolean isomorphism of $P$ onto
$P(X,\reals)$ as per Stone's representation theorem {\rm(Remarks
\ref{rm:Stone})}.
\end{theorem}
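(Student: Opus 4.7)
The plan is to bootstrap from Theorem \ref{th:F}. Since $A$ is a commutative GH-algebra, its projection lattice $P$ is a $\sigma$-complete Boolean algebra, so via Stone's isomorphism $P \cong {\mathcal F}(X)$ and the equivalence (iv) $\Leftrightarrow$ (v) of Theorem \ref{th:C(X,reals)Props} the Stone space $X$ of $P$ is basically disconnected. By the same theorem, $C(X,\reals)$ is itself a (commutative) GH-algebra, and Theorem \ref{th:F} produces a synaptic isomorphism $\Psi\colon A \to F$ onto a sub-synaptic algebra $F \subseteq C(X,\reals)$ with $C(X,\reals)_0 \subseteq F$, together with the Boolean isomorphism $\psi := \Psi|_P$ of $P$ onto $P(X,\reals)$ furnished by Stone's theorem.

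The real content is to upgrade $F$ to all of $C(X,\reals)$. The key observation is that $\Psi$ is automatically an isometry for the order-unit norms: a synaptic isomorphism preserves the order and sends $1$ to $1$, and by definition the order-unit norm in any $(V,v)$ is $\|a\|=\inf\{0<\lambda\in\reals : -\lambda v\leq a\leq\lambda v\}$, so any unit-preserving order isomorphism must preserve it. Now $A$ is commutative and a GH-algebra, hence monotone $\sigma$-complete, hence norm complete by Theorem \ref{th:NormComplete}. Therefore $F=\Psi(A)$ is norm complete and so norm closed in $C(X,\reals)$. Since $X$ is a Stone space, $C(X,\reals)_0$ is norm dense in $C(X,\reals)$ by the Stone--Weierstrass theorem as noted in the text preceding Remarks \ref{rm:Stone}; combined with $C(X,\reals)_0 \subseteq F$, closedness of $F$ forces $F = C(X,\reals)$.

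Thus $\Psi$ is a synaptic isomorphism of $A$ onto $C(X,\reals)$, its restriction $\psi$ to $P$ is the Stone-representation Boolean isomorphism onto $P(X,\reals)$ inherited from Theorem \ref{th:F}, and by Remark \ref{rm:GHisomorphism} (applied now with both $A_1=A$ and $A_2=C(X,\reals)$ being GH-algebras) $\Psi$ is automatically a GH-isomorphism, completing the theorem.

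The main anticipated obstacle is the isometry step, because without it we have no reason to conclude that $\Psi(A)$ is norm closed in $C(X,\reals)$; once this order-theoretic characterization of the norm is invoked, the argument reduces to the standard pairing of completeness with density. All other ingredients (the basic disconnectedness of $X$, the GH-structure of $C(X,\reals)$, the density of $C(X,\reals)_0$, and the Boolean-isomorphism statement) are supplied directly by earlier results in the paper.
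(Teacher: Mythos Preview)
Your proof is correct and follows essentially the same route as the paper: invoke Theorem~\ref{th:F} to get $\Psi\colon A\to F\subseteq C(X,\reals)$ with $C(X,\reals)\sb{0}\subseteq F$ norm dense, establish that $F$ is norm complete, and conclude $F=C(X,\reals)$. The only cosmetic difference is that the paper transfers monotone $\sigma$-completeness from $A$ to $F$ via $\Psi$ and then applies Theorem~\ref{th:NormComplete} to $F$, whereas you apply Theorem~\ref{th:NormComplete} directly to $A$ and then transport norm completeness to $F$ through the isometry; these are equivalent.
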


\begin{proof}
By Theorem \ref{th:F}, there is a synaptic isomorphism $\Psi$ of
$A$ onto the norm dense synaptic algebra $F$ of $C(X,\reals)$, and
since $A$ is monotone $\sigma$-complete, so is $F$. Thus, by
Theorem \ref{th:NormComplete}, $F$ is norm complete, whence $F
=C(X,\reals)$. Thus by Remarks \ref{rm:ComIsom}, $C(X,\reals)$
is a GH-algebra and $\Psi$ is a synaptic isomorphism.
\end{proof}

\begin{theorem}\label{th:CommutGH} Let $A$ be a commutative synaptic
algebra. Then the following conditions are mutually equivalent{\rm:}
\begin{enumerate}
\item $A$ is a GH-algebra.
\item There exists a compact Hausdorff space $Y$ such that, as a
 real associative algebra, $A$ is isomorphic to $C(Y,\reals)$.
\item $A$ is Dedekind $\sigma$-complete.
\item $A$ is monotone $\sigma$-complete.
\item The unit interval $E$ of $A$ is an MV-algebra that as a lattice
 is $\sigma$-complete.
\item $A$ is norm complete.
\end{enumerate}
Moreover, any of conditions {\rm{(i)--(vi)}} implies that the projection
lattice $P$ of $A$ is a $\sigma$-complete Boolean algebra.
\end{theorem}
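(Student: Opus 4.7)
The plan is to establish all six conditions as mutually equivalent by completing a cycle through (i), (ii), (iv), (vi), and then linking (iii) and (v) directly to (iv). The principal tools are Theorems \ref{th:F}, \ref{th:FnRepGH}, \ref{th:C(X,reals)Props}, \ref{th:NormComplete}, Remark \ref{rm:ComIsom}, Lemma \ref{lm:ComLat}, and the fact, stated in the excerpt after Definition \ref{df:Commutative VP}, that a commutative synaptic algebra is a GH-algebra iff it is monotone $\sigma$-complete.

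First I would dispose of the easy pieces. (i) $\Leftrightarrow$ (iv) is exactly the quoted background. (iii) $\Rightarrow$ (iv) is immediate from Definition \ref{df:CompletenessProps}. For (iv) $\Rightarrow$ (iii), Lemma \ref{lm:ComLat} makes $A$ a lattice, so from any sequence $(a_n)$ bounded above by $b$ I form the ascending sequence $b_n := \bigvee_{k\le n} a_k$, which is still bounded by $b$; (iv) yields $b_n \nearrow b^*$, and one checks $b^* = \sup_n a_n$. For (iv) $\Leftrightarrow$ (v), Lemma \ref{lm:ComLat} makes $E$ an MV-algebra automatically, so (v) reduces to $\sigma$-completeness of $E$ as a lattice; a bounded ascending sequence $a_1 \le a_n \le b$ in $A$ translates by the affine shift $e_n := (a_n - a_1)/\|b-a_1\|$ to an ascending sequence in $E$, and this correspondence transports suprema in both directions, while conversely any ascending sequence in $E$ is bounded in $A$ by $1$ and its supremum in $A$ lies in $E$.

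For the functional-representation implications, (i) $\Rightarrow$ (ii) follows by taking $Y$ to be the Stone space of $P$ and applying Theorem \ref{th:FnRepGH}, with Remark \ref{rm:ComIsom} recasting the resulting synaptic isomorphism as an algebra isomorphism. Conversely for (ii) $\Rightarrow$ (i), any algebra isomorphism $\phi\colon A \to C(Y,\reals)$ is, by Remark \ref{rm:ComIsom}, a synaptic isomorphism making $C(Y,\reals)$ a commutative synaptic algebra; Theorem \ref{th:C(X,reals)Props} then forces $C(Y,\reals)$, hence $A$, to be a GH-algebra. The implication (iv) $\Rightarrow$ (vi) is Theorem \ref{th:NormComplete}.

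The main obstacle is (vi) $\Rightarrow$ (ii). I would invoke Theorem \ref{th:F} to obtain a synaptic isomorphism $\Psi \colon A \to F$ onto a sub-synaptic algebra $F$ of $C(X,\reals)$, with $X$ the Stone space of $P$, where $F$ contains the norm-dense subspace $C(X,\reals)_0$. The key observation is that, being a synaptic isomorphism, $\Psi$ is an order isomorphism sending $1$ to $1$, hence an isometry between the order-unit norm on $A$ and the order-unit norm on $F$, which is the restriction of the sup norm on $C(X,\reals)$. Therefore norm-completeness of $A$ transfers to completeness of $F$ as a metric subspace of $C(X,\reals)$, making $F$ sup-norm closed there; since $F$ already contains the dense subspace $C(X,\reals)_0$, this forces $F = C(X,\reals)$ and yields (ii) with $Y = X$, closing the cycle. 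Finally, the ``moreover'' clause follows by combining Lemma \ref{lm:ComLat} (giving that $P$ is Boolean) with the citation \cite[Lemma 5.4]{FPgh} quoted in the excerpt, which provides $\sigma$-completeness of $P$ once $A$ is known to be a GH-algebra.
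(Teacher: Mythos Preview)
Your proof is correct, and the backbone---cycling through (i), (ii), (iv), (vi) via Theorems \ref{th:FnRepGH}, \ref{th:C(X,reals)Props}, \ref{th:NormComplete}, and \ref{th:F}, with (vi) $\Rightarrow$ (ii) handled by pushing norm-completeness across the isometry $\Psi$ to force $F=C(X,\reals)$---is exactly the paper's argument. The differences lie in how (iii) and (v) are attached. For (iii) the paper goes (i) $\Rightarrow$ (iii) indirectly through the functional representation (using that $C(X,\reals)$ is Dedekind $\sigma$-complete by Theorem \ref{th:C(X,reals)Props}), while you give the direct lattice argument (iv) $\Rightarrow$ (iii) via partial suprema. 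More substantially, for (v) the paper proves (v) $\Rightarrow$ (ii) by invoking an external representation theorem for $\sigma$-complete convex MV-algebras \cite[Theorem 7.3.9]{DvPu} to obtain an effect-algebra isomorphism $E\to E(Y,\reals)$, extending it linearly via Theorem \ref{th:restriction/extension}, and then verifying multiplicativity through approximation by simple elements. Your route (v) $\Leftrightarrow$ (iv) via an affine rescaling into $E$ is more elementary and self-contained (one should note, though it is routine, that a supremum computed in $E$ remains a supremum in $A$ because $A$ is a lattice and any upper bound $d\in A$ may be replaced by $d\wedge 1\in E$); the paper's approach, by contrast, produces the representing space $Y$ directly from the MV-structure of $E$ without reference to $P$. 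For the ``moreover'' clause the paper cites \cite[Theorem 5.7]{FPreg} rather than \cite[Lemma 5.4]{FPgh}, but your citation is equally supported by what the text already records.
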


\begin{proof} (i) $\Leftrightarrow$ (ii). That (i) $\implies$ (ii) follows
from Theorem \ref{th:FnRepGH}. Conversely, if (ii) holds, then by Remark
\ref{rm:ComIsom}, $C(Y,\reals)$ is an SA and there is a synaptic isomorphism
of $A$ onto $C(Y,\reals)$. Thus, by Theorem \ref{th:C(X,reals)Props},
$C(Y,\reals)$ is a GH-algebra, and by Remark \ref{rm:GHisomorphism}, (i) holds.

(i) $\Rightarrow$ (iii) $\Rightarrow$ (iv) $\Rightarrow$ (i). Assume (i).
Then by Theorem \ref{th:FnRepGH}, there is a synaptic isomorphism of $A$
onto $C(X,\reals)$ where $X$ is a compact Hausdorff and basically
disconnected space, by Theorem \ref{th:C(X,reals)Props}, $C(X,\reals)$
is Dedekind $\sigma$-complete, whence $A$ is also Dedekind $\sigma$-complete,
so (i) $\Rightarrow$ (iii). Clearly, (iii) $\Rightarrow$ (iv), and since
$A$ is commutative, (iv) $\Rightarrow$ (i). Thus, (i)--(iv) are mutually
equivalent.

(iii) $\implies$ (v). Assume (iii). By Lemma \ref{lm:ComLat}, $E$ is an
MV-(effect)algebra, $E$ is bounded below by $0$ and above by $1$, the
mapping $e\mapsto 1-e$ is an involution on $E$, and $E$ inherits Dedekind
$\sigma$-completeness from $A$, whence $E$ is a $\sigma$-complete lattice,
and we have (v).

(v) $\Rightarrow$ (ii). Assume (v). Note that $E$ is a convex subset of
$A$, so by \cite[Theorem 7.3.9]{DvPu}, there is a compact Hausdorff space
$Y$ and an effect-algebra isomorphism $\phi$ from $E$ onto the unit interval
$E(Y,\reals)$ of $C(Y,\reals)$. Evidently, $\phi$ is a lattice isomorphism.
By Theorem \ref{th:restriction/extension}, $\phi$ can be extended to a
positive normalized linear isomorphism, $\Phi\colon A\to C(Y,\reals)$ which
is obviously an isometry. If $p\in E$, then by \cite[Theorem 2.6]{Fsa},
$p\in P\Leftrightarrow p\wedge(1-p)=0$, and since the operations on $C(Y,
\reals)$ are performed pointwise, a similar condition holds for the elements
of $E(Y,\reals)$ that belong to $P(Y,\reals)$. Therefore, $\Phi$ maps $P$
onto $P(Y,\reals)$, and as such it is a Boolean isomorphism.

Suppose $a\in A\sb{0}$ with $a=\sum\sb{i=1}\sp{n}\lambda\sb{i}p\sb{i}$ in
spectral form. Then $\Phi(a)=\sum\sb{i=1}\sp{n}\lambda\sb{i}\Phi(p\sb{i})
\in C(Y,\reals)\sb{0}$, and since $a\sp{2}=\sum\sb{i=1}\sp{n}\lambda\sb{i}
\sp{2}p\sb{i}$, it follows that $\Phi(a\sp{2})=\sum\sb{i=1}\sp{n}\lambda
\sb{i}\sp{2}\Phi(p\sb{i})=\Phi(a)\sp{2}$.

If $a\in A$, then $a$ is a norm limit $a\sb{n}\rightarrow a$ of a sequence
$(a\sb{n})\sb{n=1}\sp{\infty}\in A\sb{0}$. Therefore, since $\Phi$ is an
isometry and both $A$ and $C(Y,\reals)$ are commutative, we have $a\sb{n}
\sp{2}\rightarrow a\sp{2}$, $\Phi(a\sb{n})\sp{2}\rightarrow\Phi(a)\sp{2}$,
and $\Phi(a\sb{n})\sp{2}=\Phi(a\sb{n}\sp{2})\rightarrow \Phi(a\sp{2})$,
whence $\Phi(a\sp{2})=\Phi(a)\sp{2}$, i.e., $\Phi$ preserves squares.
Therefore, if $a,b\in A$, then since $A$ is commutative, $ab=\frac12\left
((a+b)\sp{2}-a\sp{2}-b\sp{2}\right)$, it follows that $\Phi(ab)=\Phi(a)
\Phi(b)$, and we have (ii).  Thus, (i)--(v) are mutually equivalent.

(ii) $\Leftrightarrow$ (vi). That (ii) $\Rightarrow$ (vi) follows from
the observation that the algebra isomorphism in (ii) is necessarily
an isometry. Conversely, assume (vi). Then, with the notation of Theorem
\ref{th:F}, there is a synaptic isomorphism $\Psi$ of $A$ onto $F
\subseteq C(X,\reals)$. Also since $C(X,\reals)\sb{0}\subseteq F$ is norm
dense in $C(X,\reals)$, so is $F$. By (vi), $A$ is norm complete,
hence $F$ is norm complete as well; therefore $F=C(X,\reals)$, and
with $Y:=X$, we have (ii).

The remaining statement follows by \cite[Theorem 5.7]{FPreg}.
\end{proof}

\section{States on a commutative GH-algebra} \label{sc:statesCGH}

By Lemma \ref{lm:ComLat}, Theorem \ref{th:stateislinfnl} and \cite
[Theorem 12.18]{Good}, we have the following result.

\begin{theorem} \label{th:extreme}
If $A$ is a commutative synaptic algebra and $\rho\in S(A)$, then
$A$ is a vector lattice and the following conditions are mutually
equivalent{\rm: (i)} $\rho\in\Ext(S(A))$. {\rm(ii)} $\rho\colon A
\to\reals$ is a lattice homomorphism. {\rm(iii)} $\rho(a\wedge\sb{A}b)
=\min\{\rho(a),\rho(b)\}$ for all $a,b\in A\sp{+}$.
\end{theorem}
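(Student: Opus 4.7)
The plan is to package the statement as a direct consequence of the three cited results, with a small bridging argument to move between conditions (ii) and (iii).

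First, by Lemma \ref{lm:ComLat}, the hypothesis that $A$ is commutative implies that $A$ is lattice ordered, hence (as an order-unit normed space) a vector lattice. In particular $(A,1)$ is a unital abelian $\ell$-group. Second, Theorem \ref{th:stateislinfnl} says that a state $\rho\in S(A)$ is the same thing as a positive normalized additive-group homomorphism $\rho\colon A\to\reals$, so $S(A,1)$ coincides with the state space of the unital $\ell$-group $(A,1)$ in the sense of \cite[Chap.~6]{Good}. All hypotheses of \cite[Theorem~12.18]{Good} are therefore met, and that theorem asserts exactly that the extremal states on a unital abelian $\ell$-group are precisely the normalized positive group homomorphisms into $\reals$ which are lattice homomorphisms. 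This yields the equivalence (i)~$\Leftrightarrow$~(ii).

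For (ii)~$\Leftrightarrow$~(iii), the forward direction is immediate since a lattice homomorphism $\rho\colon A\to\reals$ satisfies $\rho(a\wedge b)=\rho(a)\wedge\rho(b)=\min\{\rho(a),\rho(b)\}$ in particular for $a,b\in A\sp{+}$. For the converse, I would use that $1$ is an order unit: given arbitrary $a,b\in A$ pick $\lambda\in\reals\sp{+}$ large enough that $a+\lambda, b+\lambda\in A\sp{+}$. Since $A$ is a vector lattice, translation preserves meets, so $(a+\lambda)\wedge(b+\lambda)=(a\wedge b)+\lambda$, and linearity of $\rho$ together with (iii) gives
\[
\rho(a\wedge b)+\lambda=\rho\bigl((a+\lambda)\wedge(b+\lambda)\bigr)=\min\{\rho(a)+\lambda,\rho(b)+\lambda\}=\min\{\rho(a),\rho(b)\}+\lambda,
\]
so $\rho(a\wedge b)=\min\{\rho(a),\rho(b)\}$ for all $a,b\in A$. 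Because $a\vee b=a+b-(a\wedge b)$ in a vector lattice, a linear map that preserves binary meets also preserves binary joins, so $\rho$ is a lattice homomorphism, i.e.\ (ii) holds.

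There is no serious obstacle; the only non-bookkeeping step is the translation trick above, and it relies only on the order-unit property of $1$ and the fact that meets in a vector lattice are translation invariant. The heart of the theorem is contained in Goodearl's characterization of extremal states on unital abelian $\ell$-groups, which applies verbatim once commutativity has been used (via Lemma \ref{lm:ComLat}) to install the lattice structure and Theorem \ref{th:stateislinfnl} has been used to match the two notions of state.
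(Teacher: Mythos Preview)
Your proposal is correct and takes essentially the same approach as the paper: the paper simply states that the theorem follows from Lemma~\ref{lm:ComLat}, Theorem~\ref{th:stateislinfnl}, and \cite[Theorem~12.18]{Good}, and you have unpacked precisely this chain of citations. Your added translation argument for (iii)~$\Rightarrow$~(ii) is sound (and in fact the equivalence of (ii) and (iii) is already part of Goodearl's Theorem~12.18, so the extra work is harmless but not strictly needed).
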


\begin{theorem}
Suppose that $A$ is a commutative GH-algebra, $X$ is the compact
Hausdorff basically disconnected Stone space of the $\sigma$-complete
Boolean algebra $P$, $\Psi\colon A\to C(X,\reals)$ is the synaptic
isomorphism of Theorem \ref{th:FnRepGH}, and $\rho\in S(A)$. Then
the following conditions are mutually equivalent{\rm:}
\begin{enumerate}
\item $\rho\in\Ext(S(A))$.
\item There exists $x\in X$ such that $\rho(a)=(\Psi(a))(x)$ for all $a\in A$.
\item $\rho$ is multiplicative, i.e., $\rho(ab)=\rho(a)\rho(b)$ for all $a,b\in A$.
\item $\rho(p)\in\{0,1\}$ for all $p\in P$.
\end{enumerate}
\end{theorem}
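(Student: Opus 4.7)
My overall plan is to pull the question across the synaptic isomorphism $\Psi\colon A\to C(X,\reals)$ of Theorem \ref{th:FnRepGH}, where Theorem \ref{States&ProbMeas}(iv) already identifies the extremal states of $C(X,\reals)$ as the point evaluations $\gamma\sb{x}$. Since $A$ is commutative, Remark \ref{rm:ComIsom} upgrades $\Psi$ to an associative algebra isomorphism, and, being a unit-preserving positive linear bijection between order-unit normed spaces, it is also an isometry. Composition with $\Psi\sp{-1}$ therefore yields an affine bijection $S(A)\leftrightarrow S(C(X,\reals))$ that preserves extreme points, making the known description of $\Ext(S(C(X,\reals)))$ directly available.

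I would establish the easy arrows (i)$\Rightarrow$(ii)$\Rightarrow$(iii)$\Rightarrow$(iv) first. For (i)$\Rightarrow$(ii), if $\rho\in\Ext(S(A))$ then $\rho\circ\Psi\sp{-1}\in\Ext(S(C(X,\reals)))$, which by Theorem \ref{States&ProbMeas}(iv) equals $\gamma\sb{x}$ for some $x\in X$, giving $\rho(a)=\Psi(a)(x)$. For (ii)$\Rightarrow$(iii), use that $\Psi$ is multiplicative on the commutative SA $A$ and that evaluation at $x$ is multiplicative on $C(X,\reals)$. For (iii)$\Rightarrow$(iv), $p\sp{2}=p$ forces $\rho(p)=\rho(p)\sp{2}$, so $\rho(p)\in\{0,1\}$.

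The substantive step is (iv)$\Rightarrow$(ii), where a point $x\in X$ must be manufactured from $\rho$. By Lemma \ref{lm:ComLat} the projection lattice $P$ is a Boolean algebra, and by the identification in Remarks \ref{rm:Stone} the points of $X$ are precisely the Boolean homomorphisms $P\to\mathbf{2}$. I would therefore define $x\sb{0}\colon P\to\mathbf{2}$ by $x\sb{0}(p):=\rho(p)$ and verify it is a Boolean homomorphism: preservation of complements is immediate from $\rho(p)+\rho(p\sp{\perp})=1$; preservation of meet and join follows from positivity of $\rho$ together with the Boolean identity $p+q=p\vee q+p\wedge q$ (equivalently $p\wedge q=pq$ in $A$), via a short case analysis on the four $\{0,1\}$-values of $\rho(p),\rho(q)$. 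Hence $x\sb{0}\in X$, and Stone's representation $\psi(p)=\chi\sb{K\sb{p}}$ of Theorem \ref{th:FnRepGH} and Remarks \ref{rm:Stone} yields $\Psi(p)(x\sb{0})=\chi\sb{K\sb{p}}(x\sb{0})=x\sb{0}(p)=\rho(p)$ for every $p\in P$. Linearity then gives $\rho(a)=\Psi(a)(x\sb{0})$ on the set $A\sb{0}$ of simple elements, and norm-density of $A\sb{0}$ in the commutative GH-algebra $A$ combined with continuity of $\rho$ (Theorem \ref{th:FnlProps}(iv)) and of $f\mapsto f(x\sb{0})$ on $C(X,\reals)$ propagates the equality to all of $A$, yielding (ii).

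The principal obstacle is the verification that the two-valued functional $x\sb{0}:=\rho|\sb{P}$ is genuinely a Boolean homomorphism, plus the continuous extension of $\rho=\Psi(\,\cdot\,)(x\sb{0})$ from $A\sb{0}$ to $A$; the remaining implications are then a straightforward transfer across the isometric algebra isomorphism $\Psi$ using Theorem \ref{States&ProbMeas}(iv).
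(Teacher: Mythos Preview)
Your proposal is correct, and the implications (i)$\Leftrightarrow$(ii), (ii)$\Rightarrow$(iii), (iii)$\Rightarrow$(iv) coincide with the paper's argument essentially verbatim. The genuine divergence is in closing the cycle at (iv). The paper proves (iv)$\Rightarrow$(i) by transporting $\rho$ to $\gamma=\rho\circ\Psi\sp{-1}\in S(C(X,\reals))$, invoking the Riesz representation theorem (Theorem \ref{States&ProbMeas}(ii)) to obtain a regular Borel probability measure $\mu$ on $X$ with $\mu(K)\in\{0,1\}$ for every clopen $K$, and then running a compactness/regularity argument to show that $\mu$ must concentrate at a single point $x$, whence $\gamma=\gamma\sb{x}$. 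Your route for (iv)$\Rightarrow$(ii) bypasses measure theory entirely: you exploit the concrete description of $X$ in Remarks \ref{rm:Stone} as the set of Boolean homomorphisms $P\to{\bf 2}$, verify directly that $x\sb{0}:=\rho|\sb{P}$ is such a homomorphism, match $\rho$ with $\Psi(\cdot)(x\sb{0})$ on $P$, and extend to $A$ via linearity on $A\sb{0}$ and norm density. This is more elementary and self-contained---it uses only Stone duality and the continuity of states---whereas the paper's argument leans on the Riesz theorem and regularity of Borel measures; on the other hand, the paper's approach would adapt more readily to settings where one starts from a measure rather than a two-valued functional on projections. One small remark: your chain (i)$\Rightarrow$(ii)$\Rightarrow$(iii)$\Rightarrow$(iv)$\Rightarrow$(ii) needs (ii)$\Rightarrow$(i) to close, but this is implicit in your opening observation that the affine bijection $S(A)\leftrightarrow S(C(X,\reals))$ preserves extreme points together with Theorem \ref{States&ProbMeas}(iv).
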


\begin{proof}
By Theorem \ref{th:FnRepGH}, $C(X,\reals)$ is a GH-algebra and the restriction
of $\Psi$ to $P$ is the Boolean isomorphism $\psi$ of $P$ onto $P(X,\reals)$
provided by Stone's representation theorem. Evidently, $\Psi$ induces an
affine bijection $\rho\leftrightarrow\gamma$ between $S(A)$ and $S(C(X,
\reals))$ according to $\gamma=\rho\circ\Psi\sp{-1}$ and $\rho=\gamma
\circ\Psi$. (The symbol $\circ$ denotes function composition.) Thus if
$\rho\leftrightarrow\gamma$, then $\rho$ is extremal iff $\gamma$ is extremal.
Thus, let $\rho\in S(A)$ and let $\gamma:=\rho\circ\Psi\sp{-1}$.

(i) $\Leftrightarrow$ (ii). By Lemma \ref{States&ProbMeas} (iv), part (i)
holds iff there exists $x\in X$ such that $\gamma=\gamma\sb{x}$, i.e., iff
$\rho=\gamma\sb{x}\circ\Psi$, and the latter condition is clearly equivalent
to (ii).

(ii) $\Rightarrow$ (iii). Assume (ii) and let $a,b\in A$. Then $\rho(ab)=
(\Psi(ab))(x)=(\Psi(a)\Psi(b))(x)=(\Psi(a))(x)(\Psi(b))(x)=\rho(a)\rho(b)$.

(iii) $\Rightarrow$ (iv). Assume (iii) and let $p\in P$. Then as $p=p\sp{2}$,
we have $\rho(p)=\rho(p\sp{2})=(\rho(p))\sp{2}$, whence $\rho(p)\in\{0,1\}$.

(iv) $\Rightarrow$ (i). Assume (iv). By Lemma \ref{States&ProbMeas} (iv)
again, it will be sufficient to prove that there exists $x\in X$ with
$\gamma=\gamma\sb{x}$. For $p\in P$, $\rho(p)=\gamma(\Psi(p))$, and $\{\Psi(p):
p\in P\}=P(X,\reals)=\{\chi\sb{K}:K\in{\mathcal F}(X)\}$, whence (iv) is
equivalent to the condition that $\gamma(\chi\sb{K})\in\{0,1\}$ for all
$K\in{\mathcal F}(X)$. By Lemma \ref{States&ProbMeas} (ii), there exists a
probability measure $\mu$ on $X$ such that $\gamma(f)=\int\sb{X}f\,d\mu$ for
all $f\in C(X,\reals)$, and therefore $\mu(K)=\int\sb{X}\chi\sb{K}\,d\mu=
\gamma(\chi\sb{K})\in\{0,1\}$ for all $K\in{\mathcal F}(X)$.

For each $x\in X$, there are two possible cases. \emph{Case I}: If $x\in
K\in{\mathcal F}(X)$, then $\mu(K)\not=0$, whence $\mu(K)=1$. \emph{Case
II}: There exists $K\sb{x}\in{\mathcal F}(X)$ such that $x\in K\sb{x}$ and
$\mu(K\sb{x})=0$, whence $\mu(\{x\})=0$.

Suppose that $x\in X$, that \emph{Case I} holds for $x$, and that $U
\subset X$ is an open set with $x\in U$. Then there exists $K\in
{\mathcal F}(X)$ with $x\in K\subseteq U$ and $\mu(K)=1$, whence $\mu(U)
=1$. Therefore, since the measure $\mu$ is regular, $\mu(\{x\})=\inf
\{\mu(U):x\in U\subseteq X\text{\ and\ }U\text{\ is open}\}=1$. Such an
$x$ is unique, for if $y\in X$, $y\not=x$, and $\mu(\{y\})=1$, then
$\mu\{x,y\}=2$, contradicting $\mu(X)=1$.

We claim that \emph{Case I} holds for a (necessarily unique) $x\in X$.
Suppose not. Then \emph{Case II} holds for all $x\in X$, whence $X$ is
covered by the sets $K\sb{x}$, $x\in X$, with $\mu(K\sb{x})=0$. Since
each $K\sb{x}$ is open and $X$ is compact, it follows that there exist
$x\sb{1}, x\sb{2}, ..., x\sb{n}\in X$ such that $X=\bigcup\sb{i=1}\sp{n}K
\sb{x\sb{i}}$. But a finite union of sets of $\mu$-measure $0$ has
$\mu$-measure $0$, so $\mu(X)=0$, again contradicting $\mu(X)=1$.  Thus,
$\mu$ concentrates measure on a unique point $x\in X$, and it follows
that $\gamma(f)=\int\sb{X}f\,d\mu=f(x)$, i.e., $\gamma=\gamma\sb{x}$.
\end{proof}


\begin{thebibliography}{WW}

\bibitem{Alf} Alfsen, E.M., \emph{Compact Convex Sets and Boundary
Integrals}, Springer-Verlag, Heidelberg, New York, 1971.

\bibitem{ASS78} Alfsen, E.M., Shultz, F.W., and St\"{o}rmer, E.,
A Gelfand-Neumark theorem for Jordan algebras, \emph{Adv. Math.}
{\bf 28} (1978) 11--56.

\bibitem{BBGP} Beltrametti, E., Bugajski S., Gudder, S. and Pulmannov\'{a}, S., Convex
and linear effect algebras, \emph{Rep. Math. Phys.} {\bf 44}, no. 3 (1999)
359--379.

\bibitem{Beran} Beran, L., \emph{Orthomodular Lattices - Algebraic
Approach}, Academia $\&$ D. Reidel Publ. Co., Praha $\&$ Dordrecht, 1984.

\bibitem{Berb} Berberian, S.K., \emph{Baer *-Rings}, Springer, Berlin, 1972.

\bibitem{BGPconvex} Bugajski, S., Gudder, S., and  Pulmannov\'{a}, S., Convex effect algebras,
state ordered effect algebras, and ordered linear spaces, \emph{Rep. Math. Phys.}
{\bf 45}, no. 3 (2000) 371–-388.

\bibitem{BLM} Busch, P., Lahti, P., and Mittelstaedt, P., \emph{The
Quantum Theory of Measurement}, Springer-Verlag, Berlin, Heidelberg,
New York, 1991.

\bibitem{Chang} Chang, C.C., Algebraic analysis of many valued logics,
\emph{Trans. Amer. Math. Soc.} {\bf 88} (1958) 467--490.

\bibitem{DvurStates1}  Dvur\v ecenskij, Anatolij, Representation of states on
effect-tribes and effect algebras by integrals, \emph{Rep. Math. Phys.}
{\bf 67}, no. 1 (2011) 63-–85.

\bibitem{DvurStates2} Dvure\v{c}enskij, Anatolij, States on quantum and algebraic
structures and their integral representation, \emph{Fuzzy Sets and Systems}
{\bf 259} (2015) 56–-67.

\bibitem{DvPu} Dvure\v{c}enskij, A. and Pulmannov\'a, S., \emph{New Trends in
 Quantum Structures}, Inter Science, Bratislava and Kluwer, Dordrecht, 2000.

\bibitem{FMVH} Foulis, D.J., MV and Heyting effect algebras. Invited papers
dedicated to Maria Luisa Dalla Chiara, Part II. \emph{Found. Phys.} {\bf 30},
no. 10 (2000) 1687-–1706.

\bibitem{FEOSS} Foulis, D.J., Effects, observables, states, and symmetries
in physics, \emph{Found. Phys.} {\bf 37}, no. 10 (2007) 1421-–1446.

\bibitem{Fsa} Foulis, D.J., Synaptic algebras, Math. Slovaca {\bf 60} (2010) 631--654.

\bibitem{FoBe} Foulis, D.J. and Bennett, M.K., Effect algebras and unsharp
quantum logics, \emph{Found. Phys.} {\bf 24} (1994) 1331--1352.

\bibitem{FPgh} Foulis, D.J. and Pulmannov\'a, S., Generalized Hermitian algebras,
\emph{Internat. J. Theoret. Phys.} {\bf 48} (2009) 1320--1333.

\bibitem{FPSpin} Foulis, D.J. and Pulmannov\'{a}, S., Spin factors as generalized
Hermitian algebras, \emph{Found. Phys.} {\bf 39} (2009) 237--255.

\bibitem{FPproj} Foulis, D.J. and Pulmannov\'a, S., Projections in a synaptic
algebra, \emph{Order} {\bf 27} (2010) 235--257.

\bibitem{FPreg} Foulis, D.J. and Pulmannov\'{a}, S., Regular elements in a generalized
Hermitian algebra, \emph{Math. Slovaca} {\bf 61} (2011) 155--172.

\bibitem{FPtype} Foulis, D.J. and Pulmannov\'a, S., Type-decomposition of a synaptic
algebra, \emph{Found. Phys.} {\bf 43}, no 8 (2013) 948--968.

\bibitem{FPcom} Foulis, D.J. and Pulmannov\'a, S., Commutativity in synaptic algebras,
\emph{Math. Slovaca}, to appear.

\bibitem{FJP2proj} Foulis, D.J., Jen\v cov\'a, A., and Pulmannov\'a, S., Two
projections in a synaptic algebra, \emph{Linear Algebra Appl.} {\bf 478} (2015)
163--287.

\bibitem{FJPpande} Foulis, D.J., Jen\v{c}ov\'a, A., and Pulmannov\'a, S., A
projection and an effect in a synaptic algebra, \emph{Linear Algebra Appl.}
{\bf 485} (2015) 417-–441.

\bibitem{vectlat} Foulis, D.J., Jen\v{c}ov\'a, A., and Pulmannov\'a, S., Vector
lattices in synaptic algebras, arXiv:1605.06987 [math.RA].

\bibitem{GGJ} Gheondea, Aurelian,  Gudder, Stanley, and  Jonas, Peter, On the
infimum of quantum effects, \emph{J. Math. Phys.} {\bf 46}, no. 6 (2005)
11 pp.

\bibitem{Good} Goodearl, K.R., \emph{Partially Ordered Abelian Groups with
Interpolation}, Math. Surveys and Monographs No. 20, AMS, Providence, 1986.

\bibitem{GrNoStates} Greechie, R. J., Orthomodular lattices admitting no states,
\emph{J. Combinatorial Theory Ser. A} {\bf 10} (1971) 119-–132.

\bibitem{Hal} Halmos, P.R., \emph{Measure Theory}, D. van Nostrand, New York, 1954.

\bibitem{Hand} Handelman, D., Rings with involution as partially ordered
abelian groups, \emph{Rocky Mt. J. Math.} {\bf 11} (1981) 337--381.

\bibitem{JPOC} Jen\v{c}a, Gejza and Pulmannov\' a,  Sylvia, Orthocomplete
effect algebras, \emph{Proc. Amer. Math. Soc.} {\bf 131}, no. 9 (2003)
2663-–2671.

\bibitem{JencaMV} Jen\v{c}a, Gejza, A representation theorem for MV-algebras,
\emph{Soft Computing} {\bf 11} (2007) 557--564.

\bibitem{KR} Kadison, R.V. and Ringrose, J.R., \emph{Fundamentals of the Theory
of Operator Algebras}, vol. I, Academic Press, New York, 1983.

\bibitem{Kalm} Kalmbach, G., \emph{Orthomodular Lattices}, Academic Press, London,
 New York, 1983.

\bibitem{McC} McCrimmon, K. \emph{A taste of Jordan algebras},
Universitext, Springer-Verlag, New York, 2004, ISBN: 0-387-95447-3.

\bibitem{Pid} Pulmannov\'a, S., A note on ideals in synaptic algebras,
\emph{Math. Slovaca} {\bf 62}, no. 6 (2012) 1091--1104.

\bibitem{ZRBlocks} Rie\v canov\'{a}, Zdenka, Generalization of blocks for D-lattices
and lattice-ordered effect algebras, \emph{Internat. J. Theoret. Phys.} {\bf 39},
no. 2 (2000) 231-–237.

\bibitem{Sary} Sarymsakov, T. A., Ayupov, Sh. A., Khadzhiev, Dzh., and
Chilin, V.I., \emph{Uporyadochennye algebry} (Russian) [Ordered algebras]
``Fan'', Tashkent, 1983.

\bibitem{Sik} Sikorski, R., \emph{Boolean Algebras}, Springer-Verlag, Berlin,
Heidelberg, New York, 1964.

\bibitem{Top65} Topping, D.M.,  \emph{Jordan Algebras of Self-Adjoint Operators},
A.M.S. Memoir {\bf No 53} AMS, Providence, Rhode Island, 1965.

\bibitem{Var} Varadarajan, V.S., \emph{Geometry of Quantum Theory},
Springer-Verlag, New York-Berlin, 1985.




\end{thebibliography}
\end{document}